\setlist[enumerate]{noitemsep, nolistsep}
\setlist[itemize]{noitemsep, nolistsep}
\newcommand{\N}{\mathcal{N}}
\newcommand{\NP}{\mathcal{NP}}
\title{On $\lambda$-backbone coloring of cliques with tree backbones in linear time}
\author[1]{Krzysztof Michalik}
\author[1]{Krzysztof Turowski}
\affil[1]{Theoretical Computer Science Department, Jagiellonian University, Kraków, 30-348 Poland}
\runningauthor{Krzysztof Michalik et al.}
\begin{document}

\begin{frontmatter}
\maketitle

\begin{abstract}
  A $\lambda$-backbone coloring of a graph $G$ with its subgraph (also called a \emph{backbone}) $H$ is a function $c \colon V(G) \rightarrow \{1,\dots, k\}$ ensuring that $c$ is a proper coloring of $G$ and for each $\{u,v\} \in E(H)$ it holds that $|c(u) -  c(v)| \ge \lambda$.

  In this paper we propose a way to color cliques with tree and forest backbones in linear time that the largest color does not exceed $\max\{n, 2 \lambda\} + \Delta(H)^2 \lceil\log{n} \rceil$.
  This result improves on the previously existing approximation algorithms as it is $(\Delta(H)^2 \lceil\log{n} \rceil)$-absolutely approximate, i.e. with an additive error over the optimum.
  
  We also present an infinite family of trees $T$ with $\Delta(T) = 3$ for which the coloring of cliques with backbones $T$ requires at least $\max\{n, 2 \lambda\} + \Omega(\log{n})$ colors for $\lambda$ close to $\frac{n}{2}$. The construction draws on the theory of Fibonacci numbers, particularly on Zeckendorf representations. 
  \keywords{Graph coloring, backbone coloring, complete graphs, approximation algorithms, Fibonacci numbers}
\end{abstract}
\end{frontmatter}

\section{Introduction}

Within the general framework of graph coloring problems, there exists an important class of problems that is related to the frequency assignment problem: for a given set of transmitters (which are represented by the vertices
of a graph) and their adjacency (i.e. adjacent transmitters are close
enough or have a signal which is strong enough), assign the frequency bands to the transmitters in a way that keeps interference below a defined level while minimizing the total frequency span.
In some applications, it makes sense to distinguish a certain substructure of the network (called the \emph{backbone}), designed as the most important part of communication, so it has to meet additional restrictions on the assignment.

This leads us to the backbone coloring problem, introduced by Broersma in \cite{broersma2007backbone}. First, let us define formally the backbone coloring, using a standard graph notation, e.g. from \cite{murty2008graph}:
\begin{definition}
    Let $G$ be a graph and $H$ be a subgraph of $G$ with $V(G) = V(H)$. Let also $\lambda \in \mathbb{N}_+$, $\lambda \ge 2$.
    The \emph{$\lambda$-backbone coloring} of $G$ with backbone $H$ is defined as a function $c\colon V(G) \to \mathbb{N}_+$ such that
    \begin{itemize}
        \item $c(u) \neq c(v)$ for every $uv \in E(G)$,
        \item $|c(u) - c(v)| \ge \lambda$ for every $uv \in E(H)$.
    \end{itemize}
\end{definition}
Note that it differs from the vertex coloring problem in important ways: in an optimal $\lambda$-backbone coloring (i.e. using a minimum number of colors) the ordering of the colors matters, therefore we might observe that some smaller colors are not used while the larger ones are in use.

Broersma et. al. in \cite{broersma2007backbone} also defined the backbone coloring problem as an extension of a classical vertex coloring problem:
\begin{definition}
    Let $G$ be a graph and $H$ be a subgraph of $G$ with $V(G) = V(H)$. Let also $\lambda$ be a natural number greater than $1$.
    The \emph{$\lambda$-backbone coloring problem} is defined as following:
    for a given positive integer $k$, does there exists $c\colon V(G) \to [1, k]$ such that $c$ is a $\lambda$-backbone coloring of $G$ with backbone $H$?
\end{definition}
The \emph{$\lambda$-backbone coloring number} for a graph $G$ with backbone $H$ (denoted as $BBC_\lambda(G, H)$) is then defined as the smallest $k$ such that there exists a $\lambda$-backbone coloring of $G$ with backbone $H$.

Note that here and throughout the whole paper we rely on the notation $[a, b]$ representing a set of integers between $a$ and $b$, inclusive.

In general, it is straightforward to prove that the problem of determining $BBC_\lambda(G, H)$ is $\NP$-hard \cite{broersma2007backbone}.
In the same paper which introduced this problem there were shown basic bounds on the value of the $\lambda$-backbone coloring number, depending on $\chi(G)$, the chromatic number of $G$:
\begin{theorem}[Broersma et al., \cite{broersma2007backbone}]
\label{thm:broersma}
Let $G$ be a graph and $H$ its spanning subgraph.
Then $\chi(G) \le BBC_{\lambda}(G, H) \le \lambda(\chi(G) - 1) + 1$.
\end{theorem}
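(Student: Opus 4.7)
My plan is to prove the two inequalities separately, since they require quite different arguments.

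For the lower bound $\chi(G) \le BBC_\lambda(G, H)$, I would just unpack the definitions. Any $\lambda$-backbone coloring $c\colon V(G) \to [1, k]$ is by the first bullet of the definition a proper coloring of $G$, and it uses at most $k$ distinct colors. Therefore $k \ge \chi(G)$, and taking the minimum such $k$ on the left-hand side yields the desired inequality.

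For the upper bound $BBC_\lambda(G,H) \le \lambda(\chi(G)-1) + 1$, I would exhibit an explicit coloring. Start from any optimal proper coloring $c_0 \colon V(G) \to [1, \chi(G)]$ of $G$, and define $c(v) = \lambda(c_0(v) - 1) + 1$ for each $v \in V(G)$. Then $c$ takes values in $\{1, \lambda+1, 2\lambda+1, \ldots, \lambda(\chi(G)-1)+1\}$, so its range lies in $[1, \lambda(\chi(G)-1)+1]$. For every edge $uv \in E(G)$ we have $c_0(u) \neq c_0(v)$, whence $|c(u) - c(v)| = \lambda |c_0(u) - c_0(v)| \ge \lambda$; this simultaneously certifies that $c$ is proper on $G$ and that it satisfies the backbone gap on $E(H) \subseteq E(G)$. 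Hence $c$ is a valid $\lambda$-backbone coloring, which gives the stated upper bound.

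No step here looks like a real obstacle; the only thing to double-check is the off-by-one in the rescaling, to make sure the largest color is indeed $\lambda(\chi(G)-1)+1$ and not $\lambda \chi(G)$. The argument is essentially the one given by Broersma et al.; I would present it as a short direct proof in two paragraphs and note that the upper bound is tight on the complete graph with a matching backbone, to motivate later refinements in the paper.
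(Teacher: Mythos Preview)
Your proof is correct and is the standard argument for this result. Note, however, that the paper under review does not give its own proof of this theorem: it merely quotes it as a known result from Broersma et al.\ \cite{broersma2007backbone}, so there is no in-paper proof to compare against. Your write-up would serve perfectly well as a self-contained justification; the one side remark I would drop is the claim about tightness on ``the complete graph with a matching backbone,'' which is not generally where the bound is attained and is in any case not needed for the theorem.
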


In \cite{havet2014circular} and \cite{janczewski2015backbone} there were proposed other bounds, suited particularly for graphs with $\chi(H) \ll \chi(G)$:
\begin{theorem}[Havet et al., \cite{havet2014circular}]
\label{thm:havet}
Let $G$ be a graph and $H$ its spanning subgraph.
Then $BBC_{\lambda}(G, H) \le (\lambda + \chi(G) - 2) \chi(H) - \lambda + 2$.
\end{theorem}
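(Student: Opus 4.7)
My plan is to lift a proper coloring of $H$ and a proper coloring of $G$ simultaneously to a $\lambda$-backbone coloring via a two-coordinate encoding, using a parity-based zigzag on the inner coordinate to save the $\chi(H)-1$ colors that a naive product construction would waste on backbone gaps.

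I would fix proper colorings $\phi\colon V(G) \to \{1, \ldots, \chi(H)\}$ of $H$ and $\psi\colon V(G) \to \{1, \ldots, \chi(G)\}$ of $G$, and set
\[
    c(v) = (\phi(v) - 1)(\lambda + \chi(G) - 2) + \tilde\psi(v),
\]
where $\tilde\psi(v) = \psi(v)$ when $\phi(v)$ is odd and $\tilde\psi(v) = \chi(G) + 1 - \psi(v)$ when $\phi(v)$ is even. Since $\tilde\psi(v) \in [1,\chi(G)]$ regardless of parity, the image of $c$ is contained in $[1, (\lambda + \chi(G) - 2)\chi(H) - \lambda + 2]$, so the color count is correct by construction.

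Properness on $E(G)$ and the backbone condition on $H$-edges with $|\phi(u)-\phi(v)| \geq 2$ I expect to be routine: either $\phi(u)=\phi(v)$ and the residue $\tilde\psi(u)-\tilde\psi(v)$ carries the whole difference (nonzero since $\psi$ is proper), or the leading $(\lambda+\chi(G)-2)$-multiple dominates the residue (whose absolute value is at most $\chi(G)-1$). The main step is backbone edges $uv \in E(H)$ with $\phi(v) = \phi(u)+1$, which I would split by the parity of $\phi(u)$: the odd case yields $c(v)-c(u) = \lambda + 2\chi(G) - 1 - \psi(u) - \psi(v)$, and the even case yields $c(v)-c(u) = \lambda - 3 + \psi(u) + \psi(v)$. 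In each case, dropping below $\lambda$ would force $\psi(u)=\psi(v)$ (jointly equal to $\chi(G)$ in the odd case, jointly equal to $1$ in the even case), contradicting properness of $\psi$ on $uv \in E(G) \supseteq E(H)$.

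The hard part I anticipate is not the calculation but spotting the right encoding: a naive construction without the parity flip falls short by exactly one unit on the worst local pair $(\psi(u),\psi(v))=(\chi(G),1)$ at every consecutive $\phi$-layer, which is where the extra $\chi(H)-1$ colors are spent. The zigzag is tailored so that each such worst pair collapses into the forbidden equality $\psi(u)=\psi(v)$, after which the bound on the maximum of $c$ drops out of a single algebraic identity.
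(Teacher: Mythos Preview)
Your argument is correct. The zigzag encoding does exactly what you claim: the maximum value $(\chi(H)-1)(\lambda+\chi(G)-2)+\chi(G)$ equals the stated bound, properness on $G$-edges follows since consecutive $\phi$-layers are already at distance at least $\lambda-1\ge 1$, and your case split on backbone edges with $\phi(v)=\phi(u)+1$ is airtight once one uses $\psi(u)\neq\psi(v)$ on $uv\in E(H)\subseteq E(G)$.

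Note, however, that the paper does not prove this theorem at all: it is quoted from Havet et al.\ as a known bound and used only as context for the later results on tree and forest backbones. So there is no in-paper proof to compare your approach against. For what it is worth, your construction is essentially the standard one behind this inequality --- a layered coloring by $\phi$-classes with block length $\lambda+\chi(G)-2$ and the parity reversal that recovers the extra unit at each boundary --- so you have reconstructed the intended argument rather than found an alternative.
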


\begin{theorem}[Janczewski, Turowski, \cite{janczewski2015backbone}]
Let $G$ be a graph on $n$ vertices and $H$ its spanning subgraph. Then $\lambda(\chi(H) - 1) + 1 \le BBC_{\lambda}(G, H) \le \lambda(\chi(H) - 1) + n - \chi(H) + 1$.
\end{theorem}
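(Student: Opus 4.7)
The statement has two directions, and they decouple cleanly, so I would prove them separately.

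\emph{Lower bound.} Suppose $c\colon V(G)\to[1,k]$ is any $\lambda$-backbone coloring of $G$ with backbone $H$. Define $c'\colon V(H)\to \mathbb{N}_+$ by $c'(v)=\lceil c(v)/\lambda\rceil$. The key observation is that whenever $uv\in E(H)$, the inequality $|c(u)-c(v)|\ge\lambda$ forces $c'(u)\ne c'(v)$: if WLOG $c(u)<c(v)$, then $c(v)\ge c(u)+\lambda$, so $\lceil c(v)/\lambda\rceil \ge \lceil c(u)/\lambda\rceil+1$. Hence $c'$ is a proper vertex coloring of $H$ using at most $\lceil k/\lambda\rceil$ colors, so $\chi(H)\le\lceil k/\lambda\rceil$, which immediately rearranges to $k\ge\lambda(\chi(H)-1)+1$.

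\emph{Upper bound.} This I would prove by explicit construction. Fix an optimal proper coloring of $H$ with classes $V_1,\dots,V_{\chi(H)}$ of sizes $n_1,\dots,n_{\chi(H)}$ (so $\sum_i n_i = n$). I would assign to the vertices of $V_i$ the $n_i$ distinct consecutive colors in the interval $[a_i,\,a_i+n_i-1]$, where $a_1=1$ and $a_{i+1}=a_i+n_i-1+\lambda$. Within each $V_i$ any bijection with this block works, and the blocks themselves are pairwise disjoint, so the resulting map is injective and hence a proper coloring of $G$ regardless of how many edges $G$ has. For the backbone constraint: if $uv\in E(H)$ with $u\in V_i$, $v\in V_j$, $i<j$, then $c(v)-c(u)\ge a_j-(a_i+n_i-1)\ge a_{i+1}-(a_i+n_i-1)=\lambda$ (using monotonicity of the $a_j$'s). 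A straightforward induction gives $a_{\chi(H)}+n_{\chi(H)}-1=n+(\chi(H)-1)(\lambda-1)=\lambda(\chi(H)-1)+n-\chi(H)+1$, which is the claimed upper bound.

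\emph{Where the work is.} Neither direction looks technically hard: the lower bound is a one-line quotient argument, and the upper bound is a ``pack color classes into consecutive blocks separated by gaps of size $\lambda$'' construction whose only subtlety is checking that gaps of $\lambda$ between the \emph{ends} of consecutive blocks (not $\lambda$ between the starts) suffice, which the recurrence $a_{i+1}=a_i+n_i-1+\lambda$ encodes. The main thing to be careful about is bookkeeping the telescoping sum for the maximum color and verifying the inequality $c(v)-c(u)\ge\lambda$ for $H$-edges that skip color classes (i.e., $j>i+1$), which follows because the $a_j$'s are increasing.
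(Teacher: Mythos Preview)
Your proof is correct in both directions. Note, however, that the paper does not actually prove this statement: it is quoted as a known result from \cite{janczewski2015backbone} in the introduction and left without proof, so there is no in-paper argument to compare against. That said, the quotient map $c'(v)=\lceil c(v)/\lambda\rceil$ you use for the lower bound is exactly the device the paper later invokes (in the proof of \Cref{thm:bbc-algorithm-tree}) to argue that any $\lambda$-backbone coloring not fitting into two ``layers'' must use a color exceeding $2\lambda$, so your approach is fully consistent with the paper's toolkit.
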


The $\lambda$-backbone coloring problem was studied for several classes of graphs, for example split graphs \cite{broersma2009backbone}, planar graphs \cite{havet2014circular}, complete graphs \cite{janczewski2015computational}, and for several classes of backbones: matchings and disjoint stars \cite{broersma2009backbone}, bipartite graphs \cite{janczewski2015computational} and forests \cite{havet2014circular}.
For a special case $\lambda = 2$ it was also studied for many other cases e.g. triangle-free graphs with tree backbones \cite{mivskuf2009backbone}, outerplanar graphs with matching backbones \cite{araujo2019backbone} and general graphs with bounded degree backbones \cite{mivskuf2010backbone} or with tree and path backbones \cite{broersma2007backbone}.

In this paper, we turn our attention to the special case when the graph is complete (denoted $K_n$) and its backbone is a (nonempty) tree or a forest (which we will denote by $T$ and $F$, respectively).
Note that it has a natural interpretation as a labeling problem: how to assign different labels to all vertices such that on every backbone edge the difference between labels is at least $\lambda$.
This description draws a comparison e.g. to $L(k, 1)$-labeling problem (see e.g. \cite{calamoneri2011h} for a survey), where the colors of any two adjacent vertices have to differ by at least $k$ and the colors of any two vertices within distance $2$ have to be distinct.

Broersma in \cite{broersma-general} pointed out that the backbone coloring problem generalizes so called radio labeling problem: finding for a given graph $G$ the function $L\colon V(G) \to \N_+$ such that for every edge $uv \in E(G)$ it holds that $|L(u) - L(v)| \ge 2$ such that $RL(G) = \max_{v \in V(G)} L(v)$ is minimal -- that is, it is exactly the case that $RL(G) = BBC_2(K_{|V(G)|}, G)$.
However, note that the name \emph{radio labeling} sometimes it refers to a different, although related problem \cite{cada,bantva,korvze}.

In \cite{fotakis-spirakis} it was proved that the radio labeling problem can be solved in polynomial time for graphs for which $k$-coloring can be found in polynomial time for some fixed $k$, e.g. for planar graphs or graphs with bounded treewidth.
Additionally, \cite{damaschke} proved for comparability graphs we can find a partition of $V(G)$ into at most $k$ sets which induce semihamiltonian subgraphs in the complement of $G$ (i.e. it contains a Hamiltonian path) and from that it follows that $BBC_2(K_n, G)$ problem is solvable in polynomial time if $G$ is a comparability graph \cite{broersma-general}.
This generalizes an earlier result from \cite{chang-kuo}, where (under the name of $L'(2, 1)$-labelings) it was shown that this problem is polynomially solvable on cographs, a subclass of comparability graphs.

On the hardness side, one has to point out that the radio labeling problem is equivalent to the travelling salesman problem with distances $1$ (on the edges of $G$) and $2$ (on all the other edges). Thus, in general it is MAX SNP-hard \cite{papadimitriou} and $(\frac{535}{534} - \varepsilon)$-inapproximable (unless P = NP) for any $\varepsilon > 0$ \cite{karpinski-approx}, and it remains hard even for graphs of diameter $2$ \cite{fotakis-spirakis}, though it is $\frac{8}{7}$-approximable in polynomial time \cite{karpinski-approx,adamaszek}.

Returning to the backbone coloring problem with arbitrary $\lambda$, it is obvious that $\chi(K_n) = n$, $\chi(F) = 2$, so \Cref{thm:broersma,thm:havet} combined together give us the following bounds:
\begin{align*}
    \max\{n, \lambda + 1\} \le BBC_{\lambda}(K_n, F) \le \lambda + n - 1.
\end{align*}
Moreover, it was proved before in \cite{janczewski2015backbone} that there exists a $2$-approximate algorithm for complete graphs with bipartite backbones and a $3/2$-approximate algorithm for complete graphs with connected bipartite backbones. Both algorithms run in linear time. As a corollary, it was proved that we can compute $BBC_\lambda(K_n, F)$ in quadratic time, provided that $F$ is a tree on $n$ vertices and $\lambda > n - 2$.
On the other hand, for $\lambda = 2$ we know that $BBC_\lambda(K_n, F) \le n + 1$ and moreover we can solve the problem in polynomial time \cite[Theorem 3]{turowski2015optimal}.

Still, this leaves the complexity status of computing $BBC_\lambda(K_n, F)$ or its approximability in polynomial time for arbitrary $\lambda$ open. This paper aims to narrow this gap by providing an improved algorithm for $BBC_\lambda(K_n, F)$ with an additive error.

We start \Cref{sec:positive} by proving that if $F$ is a tree or forest on $n$ vertices with a maximum degree $\Delta(F)$, then $BBC_{\lambda}(K_n, F) \le \max\{n, 2 \lambda\} + \Delta^2(F) \lceil\log{n}\rceil$.
Note that this bound can be much better than the previously known ones, especially for $\lambda$ close to $\frac{n}{2}$ and small $\Delta(F)$.
We also provide a polynomial (i.e. linear for trees, quadratic for forests) algorithm to find the respective $\lambda$-backbone coloring.

This, in turn, combined with another much simpler algorithm allows us to show that we can find in polynomial time a $\lambda$-backbone coloring for $G$ with backbone forest $F$ that uses at most $\Delta^2(F) \lceil\log{n}\rceil$ colors more than the optimal $\lambda$-backbone coloring.

Previously it was known that $BBC_{\lambda}(K_n, T) = \lambda + n - 1$ when $T$ is a star. However, one can ask a more general question: how large can $BBC_{\lambda}(K_n, T)$ be when $\Delta(T)$ is small?
In \Cref{sec:negative} we show that there exists a family of trees with $\Delta(T) = 3$ such that $BBC_\lambda(K_n, T) \ge \max\{n, 2 \lambda\} + \frac{1}{48} \log_\phi{n} - 3$ (with $\phi = \frac{1 + \sqrt{5}}{2}$ being the golden ratio constant).
This result is complementary to the one in the previous section, as it shows that sometimes we need up to $\max\{n, 2 \lambda\} + \Theta(\log{n})$ colors even when we have $\Delta(T) = 3$. In a sense, we might also say that the logarithmic loss over a trivial lower bound $BBC_\lambda(K_n, T) \ge \max\{n, \lambda + 1\}$ for backbone coloring of $K_n$ with a tree backbone $T$ is tight.

Since it was proved in \cite{araujo2017existence} that for every $\lambda \ge 2$ and every connected graph $G$ there exists a spanning tree $T$ such that $BBC_\lambda(G, T) = \chi(G)$, our result establishes additionally that depending on the choice of a backbone tree, the value of $BBC_\lambda(K_n, T)$ can vary considerably, from $n$ to $\max\{n, 2 \lambda\} + \Theta(\log{n})$.
 
Finally, \Cref{sec:open} concludes with a presentation of some open problems related to our work.

\section{Complete graph with a tree or forest backbone: an algorithm}
\label{sec:positive}

In this section we will proceed as follows: we first introduce the so-called red-blue-yellow $(k,l)$-decomposition of a forest $F$ on $n$ vertices, which finds a set $Y$ of size at most $l$ such that we can split $V(F) \setminus Y$ into two independent and almost equal sets $R$, $B$ (with $|R| - |B| \le k$).
Then, we show that we can color $R$ and $B$ with sets of consecutive colors and assign to vertices from $Y$ the smallest and the largest colors in a way that in total we find a $\lambda$-backbone coloring in which the maximum color does not exceed $\max\{n, 2 \lambda\}$ more than $\Delta^2(F) \log{n}$.

We start with a few remarks on notation and some definitions.
Let $c$ be a unique $2$-coloring (up to the permutations of colors) of a tree $T$.
Let us also define $C_i(T) = \{v \in V(T)\colon c(v) = i\}$, i.e. the number of vertices in color $i$, for $i = 1, 2$.
Assume without loss of generality that $c$ is such that $
|C_1(T)| \ge |C_2(T)|$.

Throughout the paper we will use the concept of tree imbalance, defined formally as:
\begin{definition}
    Let $T$ be a tree and $c$ its unique $2$-coloring with $|C_1(T)| \ge |C_2(T)|$.
    Let the \emph{imbalance number} of $T$ be $imb(T) := |C_1(T)| - |C_2(T)|$.
\end{definition}
From this definition, it directly follows that $imb(T) \ge 0$.

First, let us prove the simple structural fact about trees in general:
\begin{lemma}
    \label{lem:tree_half}
    In every tree $T$ there exists a \emph{central vertex} $v \in V(T)$ such that every connected component of $T - v$ has at most $\frac{|V(T)|}{2}$ vertices.
\end{lemma}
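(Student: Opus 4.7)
The plan is to prove the classical fact that every tree has a \emph{centroid}, i.e.\ a vertex whose removal leaves no component larger than $|V(T)|/2$, and to observe that the centroid admits a straightforward linear-time construction.

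First I would fix an arbitrary root of $T$ and, for every vertex $v$, let $s(v)$ denote the size of its subtree. Writing $n = |V(T)|$, the components of $T - v$ are precisely the subtrees rooted at the children of $v$ together with one ``upward'' component of size $n - s(v)$, so the quantity
\begin{equation*}
    f(v) = \max\bigl\{\, n - s(v),\ \max\{ s(u) : u \text{ child of } v\}\,\bigr\}
\end{equation*}
records the largest component of $T - v$. The goal is to exhibit a vertex with $f(v) \le n/2$.

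The main step is a descent argument. Pick any vertex $v$ with $f(v) > n/2$; there is then exactly one component $C$ of $T - v$ with $|C| > n/2$, because the components are disjoint and of total size $n - 1$. Let $u$ be the unique neighbor of $v$ lying in $C$. I would then check that $f(u) < f(v)$: the ``upward'' component of $T - u$ contains $v$ and all the other old components, so it has size $n - |C| < n/2 < |C|$; and every ``downward'' component at $u$ is a proper subtree of $C$, hence strictly smaller than $|C|$. Thus $f(u) < f(v)$, and repeating the step produces a sequence along which $f$ strictly decreases. Since $f$ takes only finitely many values, the process terminates, and the terminal vertex is the desired centroid.

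For the complexity claim I would simply note that the subtree sizes $s(v)$ are computed by one DFS in $O(n)$ time, after which $f(v)$ is evaluated in time proportional to the degree of $v$; summing over all vertices gives total time $O(n)$, and one final pass picks the argmin. The only genuinely delicate part is the inequality $f(u) < f(v)$ in the descent step, and in particular the observation that $C$ is the \emph{unique} component of size exceeding $n/2$ (so that moving toward $C$ is well defined); everything else is bookkeeping.
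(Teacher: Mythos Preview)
Your proof is correct and follows essentially the same approach as the paper: both establish existence via a descent argument that moves from a vertex $v$ with a too-large component toward the unique neighbor inside that component, showing the maximum component size strictly decreases. For the algorithmic part you compute $f(v)$ for all vertices and take the $\arg\min$, whereas the paper performs the walk explicitly after a single subtree-size pass; both are linear, and your version is arguably cleaner bookkeeping, but the underlying idea is identical.
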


\begin{proof}
    The idea is to start from any vertex $w$, and then jump to its neighbor with the largest component size in $T - w$, until we hit a vertex with desired property.
    Note that for any vertex $v$ there can be at most one neighbor $u$ such that its connected component $T_u$ in $T - v$ has more than $\frac{|V(T)|}{2}$ vertices, so the jumps are unique.

    Suppose that the number of vertices in $T_u$ is equal to $k$. Then the number of vertices in all components of $T - v$ sum to $|V(T)| - k - 1$. Now if we look at $T - u$, then it has
    \begin{itemize}
        \item one connected component containing $v$ with $(|V(T)| - k - 1) + 1 < \frac{|V(T)|}{2}$ vertices,
        \item all other components which are subtrees of $T_u - u$, so they contain at most $k - 1$ vertices. 
    \end{itemize}
    Thus, every jump reduces the size of the largest connected component by at least $1$ -- and therefore the algorithm always terminates correctly.

\end{proof}

\begin{definition}
    \label{def:rby}
   Let $T$ be a tree on $n$ vertices.
   We call a partition $(R, B, Y)$ of $V(T)$ a \emph{red-blue-yellow} $(k, l)$-decomposition of $T$
   if $|Y| \le l$ and $R$ and $B$ are independent sets with $|R| - |B| = k$.
\end{definition}
From this definition it follows that if $(R, B, Y)$ is a \emph{red-blue-yellow} $(k, l)$-de\-com\-po\-si\-tion, then $(B, R, Y)$ is a \emph{red-blue-yellow} $(-k, l)$-decomposition.

\begin{lemma}
    \label{lem:imbalance}
    If $T$ is a tree with a root $v \in V(T)$ and $T_1, \dots, T_k$ are subtrees of $T$, made by removing $v$ from $T$, then $imb(T) \le 1 + \sum_{i=1}^k{imb(T_i)}$.
\end{lemma}

\begin{proof}
    Let us denote by $v_i$ the root of $T_i$ for $i = 1, \ldots, k$.
    Observe that
    \begin{itemize}
        \item either $C_1(T_i) \subseteq C_1(T)$ and $C_2(T_i) \subseteq C_2(T)$ (when $v_i \in C_2(T)$),
        \item or $C_1(T_i) \subseteq C_2(T)$ and $C_2(T_i) \subseteq C_1(T)$ (when $v_i \in C_1(T)$).
    \end{itemize}
    Thus, the imbalance of $T$ can be expressed in the following formula (where $1$ appears because of the vertex $v$):
    \begin{align*}
        imb(T) = |C_1(T)| - |C_2(T)| \le 1 + \sum_{i=1}^k \left(|C_1(T_i)| - |C_2(T_i)|\right) = 1 + \sum_{i=1}^k{imb(T_i)}.
    \end{align*}
\end{proof}

\begin{lemma}
    \label{lem:rby-tree}
    For any tree $T$ on $n$ vertices and for any $k \in [0, imb(T)]$ there exists a red-blue-yellow $(k, \lceil \log{n} \rceil)$-de\-com\-po\-si\-tion of $T$.
\end{lemma}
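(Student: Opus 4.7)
My plan is to use strong induction on $n$, driven by the centroid-like vertex supplied by \Cref{lem:tree_half}. If $k = imb(T)$ the canonical $2$-coloring $R = C_1$, $B = C_2$, $Y = \emptyset$ is already a $(k, 0)$-decomposition; otherwise $k \le imb(T) - 1$. In the nontrivial case I would invoke \Cref{lem:tree_half} to pick a vertex $v$ such that all components $T_1, \dots, T_m$ of $T - v$ have size at most $n/2$, put $v$ into $Y$, and write $d_j := imb(T_j)$, indexed so that $d_1 \ge d_2 \ge \cdots \ge d_m$. For each $T_j$ the two orientations of its unique $2$-coloring contribute either $+d_j$ or $-d_j$ to $|R| - |B|$ using no yellow vertices; by the induction hypothesis applied to $T_1$ (which has at most $n/2$ vertices), I can also realise any value $k_1 \in [-d_1, d_1]$ inside $T_1$ using at most $\lceil \log(n/2) \rceil = \lceil \log n \rceil - 1$ yellow vertices, after swapping red and blue in the recursive output when $k_1 < 0$.

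The combinatorial core will be the following balancing claim: whenever $|k| \le \sum_j d_j$, there exist signs $\sigma_2, \dots, \sigma_m \in \{+1, -1\}$ satisfying $|k - \sum_{j \ge 2} \sigma_j d_j| \le d_1$. I would prove it by a monotone greedy flip starting from $\sigma_j = +1$ for every $j \ge 2$, so $S := \sum_{j \ge 2} \sigma_j d_j$ begins at $\sum_{j \ge 2} d_j$, and then flipping signs one by one in decreasing order of $d_j$. Each flip lowers $S$ by at most $2 d_2 \le 2 d_1$, and the process eventually reaches $S = -\sum_{j \ge 2} d_j \le 0 \le k + d_1$; hence there is a first step after which $S \le k + d_1$, and at that step the previous value exceeded $k + d_1$, which forces the new $S$ to be strictly greater than $(k + d_1) - 2 d_1 = k - d_1$, giving $|k - S| \le d_1$ as required. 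The hypothesis $|k| \le \sum_j d_j$ is available in our setting because $k \le imb(T) - 1 \le \sum_j d_j$; the auxiliary inequality $imb(T) \le 1 + \sum_j imb(T_j)$ follows by restricting the unique $2$-coloring of $T$ to the components of $T - v$ and accounting for $v$ itself with a single $\pm 1$.

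Then I would set $k_1 := k - S \in [-d_1, d_1]$, recurse on $T_1$ with target $k_1$, and combine the outcome with the chosen orientations of $T_j$ for $j \ge 2$ and with $v \in Y$. This yields independent sets $R$ and $B$ (as unions over vertex-disjoint subtrees of independent sets in their $2$-colorings) with $|R| - |B| = k_1 + \sum_{j \ge 2} \sigma_j d_j = k$ and $|Y| = 1 + |Y_1| \le \lceil \log n \rceil$, closing the induction. The step I expect to be the main obstacle is the balancing claim, together with the choice of letting the largest-imbalance subtree simultaneously play both critical roles: it must be the recursion target, so that the flip window of width $2 d_1$ dominates every single flip step, and at the same time its size must be bounded by $n/2$, which is exactly what \Cref{lem:tree_half} guarantees uniformly across all components of $T - v$.
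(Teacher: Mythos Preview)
Your proposal is correct and follows essentially the same strategy as the paper: pick the centroid from \Cref{lem:tree_half}, place it in $Y$, reserve the largest-imbalance component for the next round, and orient the remaining components so that the residual target lands in $[-d_1,d_1]$, which gives the $\lceil\log n\rceil$ bound on $|Y|$ by halving. The only cosmetic differences are that the paper phrases this as an iteration with an explicit invariant $0\le |R_i|-|B_i|\le imb(T_i)$ (and introduces $k$ dummy vertices up front rather than carrying $k$ as a parameter), and its greedy step adds subtrees smallest-imbalance-first to the currently smaller side instead of your flip-from-all-positive argument; both establish the same inequality $|k-S|\le d_1$.
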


\begin{proof}
    We prove existence of such a decomposition by constructing it.

    We begin by assigning $R_0, B_0, Y_0, T_0$ as initial sets of $R, B, Y, T$ and for convenience, we initialize $R_0$ with a set $D$ of $k$ dummy isolated vertices, while $B_0 = Y_0 = \emptyset$ and $T_0 = V(T)$.
    Note, that any red-blue-yellow $(0, \lceil \log{n} \rceil)$-decomposition of $T \cup D$ can be directly turned into a red-blue-yellow $(k, \lceil \log{n} \rceil)$-decomposition of $T$ just by removing vertices from $D$ and possibly swapping first two sets.

    That is why, from now on, we will concentrate on balancing both of the sets, so that they end up with the same amount of vertices. For that purpose, we will be constructing iteratively sets $R_i, B_i, Y_i, T_i$ and introduce invariants that will hold at each step while decomposing $V(T)$ into $R,B,Y$, which guarantee the convergence to a desired solution:
    \begin{enumerate}
        \item $V(T) \cup D = R_i \cup B_i \cup Y_i \cup V(T_i)$,\label{invariant:1}
        \item $T_i$ is a tree,\label{invariant:2}
        \item $R_i$ and $B_i$ are independent sets,\label{invariant:3}
        \item $T_i$ does not have neighbors in $R_i \cup B_i$,\label{invariant:4}
        \item $0 \le |R_i| - |B_i| \le imb(T_i)$,\label{invariant:5}
    \end{enumerate}
    Of course, all of the above conditions hold for $i = 0$.

    To construct $(R_{i+1}, B_{i+1}, Y_{i+1}, T_{i+1})$ from $(R_i, B_i, Y_i, T_i)$, we do as follows:

    Consider vertex $v$ that we find using \Cref{lem:tree_half} in $T_i$ and denote by $T_i^j$ ($j = 1, \ldots, d$) the subtrees obtained by removing $v$ from $T_i$. Let us consider two cases, depending on which of the values $|R_i| - |B_i|$ and $\sum_{j = 1}^d imb(T_i^j)$ is larger.
    
    \emph{Case (A):} if it holds that $|R_i| - |B_i| > \sum_{j = 1}^d imb(T_i^j)$. However from invariant (\ref{invariant:5}) we also know that $imb(T_i) \ge |R_i| - |B_i|$. Iherefore $imb(T_i) > \sum_{j = 1}^d imb(T_i^j)$, but from \Cref{lem:imbalance} we know that $imb(T_i) \le 1 + \sum_{j = 1}^d imb(T_i^j)$, so it has to be the case that $1 + \sum_{j = 1}^d imb(T_i^j) = imb(T_i) = |R_i| - |B_i|$
    Then, we observe that for $R = R_i \cup C_2(T_i)$, $B = B_i \cup C_1(T_i)$, and $Y  = Y_i$ it is true that:
    \begin{itemize}
        \item $(R, B, Y)$ is a partition of $V(T) \cup D$,
        \item $R$ and $B$ are both independent sets -- due to the invariants (\ref{invariant:3}) and (\ref{invariant:4}),
        \item $|R| - |B| = (|R_i| - |B_i|) - (|C_1(T_i)| - |C_2(T_i)|) = imb(T_i) - imb(T_i) = 0$,
        \item $|Y_i| = i \le \lceil\log_2{n}\rceil$.
    \end{itemize}
    Therefore, $(R, B, Y)$ is a red-blue-yellow $(0, \lceil \log{n} \rceil)$-decomposition of $T \cup D$.
    
    \emph{Case (B):} if it is true that $|R_i| - |B_i| \le \sum_{j = 1}^d imb(T_i^j)$.

    Until now we never needed to order subtrees in any way, so we can safely assume that they are sorted nondecreasingly according to their imbalance (i.e. $imb(T_i^1)\le \dots \le imb(T_i^d)$). We will be iterating through $T_i^j$ and put them in either of the two sets, $R$ or $B$.

    Let us define intermediate sets $R_i^j, B_i^j$ as sets that we get after we processed first $j$ subtrees. Clearly $R_i^0 = R_i$ and $B_i^0 = B_i$.

    To get next pair of sets $R_i^{j+1}, B_i^{j+1}$ we will follow a very simple rule - we choose a smaller set (say it is $B_i^j$) and add $C_1(T_i^{j+1})$ to it, while we add $C_2(T_i^{j+1})$ to $R_i^j$. If sets are equal, we choose one of them arbitrarily. If necessary, we swap sets, so $|R_i^{j+1}| > |B_i^{j+1}|$ holds after each iteration. We iterate this procedure for $j=1,\dots, d-1$, while the final tree, $T_i^d$ becomes our $T_{i+1}$, the final sets $R_i^{d-1}$ and $B_i^{d-1}$ become our $R_{i + 1}$ and $B_{i + 1}$, and we proceed to the next iteration.

    Let us now prove that the invariants (\ref{invariant:1})--(\ref{invariant:5}) are preserved between iterations.
    The invariant (\ref{invariant:1}) holds directly by the construction of the respective sets:
    \begin{itemize}
        \item we assign $v_i$ to $Y_{i + 1}$,
        \item we distribute all vertices of subtrees $T_i^j$ for $j = 1, \ldots, d - 1$ between $R_{i + 1}$ and $B_{i + 1}$,
        \item the remaining subtree $T_i^d$ becomes the new $T_{i + 1}$.
    \end{itemize}
    Moreover, the invariant (\ref{invariant:2}) is trivially preserved directly by the fact that $T_{i + 1}$ is a subtree of $T_i$.

    The invariant (\ref{invariant:3}) states that $R_i$ and $B_i$ are independent sets. From the invariant (\ref{invariant:4}) it follows that if add to them any independent sets from $T_i$, they still remain independent. Thus, since $C_1(T_i^j)$ and $C_2(T_i^j)$ are independent, and there is no edge between $C_l(T_i^j)$ and $C_l(T_i^{j'})$ for any $l = 1, 2$ and $j \neq j'$, it follows that $R_{i + 1}$ and $B_{i + 1}$ are also independent sets -- and therefore the invariant (\ref{invariant:3}) is true also for $i + 1$.

    By the invariant (\ref{invariant:4}) for $i$, we know that $T_i$ has its neighbors only in $Y_i$. Thus, $T_{i + 1}$, is not only a subtree $T_i$, but it is only adjacent to $Y_i$ or $v$ itself, that is, only to $Y_{i + 1}$ -- thus the invariant (\ref{invariant:4}) holds for $i + 1$.
    
    The crucial observation to prove the invariant (\ref{invariant:5}) is that after $j$-th iteration of the inner for loop we have
    \begin{align*}
        0 \le |R_i^j| - |B_i^j| \le \sum_{j' = j + 1}^d imb(T_i^{j'}).
    \end{align*}
    Clearly this holds for $j = 0$ as this condition is identical to $0 \le |R_i| - |B_i| \le \sum_{j = 1}^d imb(T_i^j)$. To prove it inductively for $j = 1, 2, \ldots, d - 1$, we distinguish two cases:
    \begin{itemize}
        \item if $|R_i^{j - 1}| - |B_i^{j - 1}| < imb(T_i^d)$, then
            \begin{align*}
                |R_i^j| - |B_i^j|
                    & = \left||R_i^{j - 1}| + |C_2(T_i^j)| - |B_i^{j - 1}| - |C_1(T_i^j)|\right| 
                      = \left||R_i^{j - 1}| - |B_i^{j - 1}| - imb(T_i^j)\right| \\
                    & \le \max\{|R_i^{j - 1}| - |B_i^{j - 1}|, imb(T_i^j)\} \le imb(T_i^d),
            \end{align*}
        \item if $|R_i^{j - 1}| - |B_i^{j - 1}| \ge imb(T_i^d)$, then
            \begin{align*}
                |R_i^j| - |B_i^j|
                    & = |R_i^{j - 1}| + |C_2(T_i^j)| - |B_i^{j - 1}| - |C_1(T_i^j)| 
                      = |R_i^{j - 1}| - |B_i^{j - 1}| - imb(T_i^j) \\
                    & \le \sum_{j' = j}^d imb(T_i^{j'}) - imb(T_i^j) = \sum_{j' = j + 1}^d imb(T_i^{j'}),
            \end{align*}
            where the second to last inequality follows from the induction assumption that $|R_i^{j - 1}| - |B_i^{j - 1}| \le \sum_{j' = j}^d imb(T_i^{j'})$.
    \end{itemize}
    Ultimately, for $j = d - 1$ we get in either case that $|R_{i + 1}| - |B_{i + 1}| = |R_i^{d - 1}| - |B_i^{d - 1}| \le imb(T_i^d) = imb(T_{i + 1})$, and clearly by construction it is always true that $|R_i^j| - |B_i^j| \ge 0$ so the invariant (\ref{invariant:5}) is also preserved for $i + 1$.

    The above inductive proof establishes that all the invariants are true at the beginning of the iteration when $V(T_k) = \emptyset$.
    Therefore, from invariant (\ref{invariant:1}) it follows that $(R_k, B_k, Y_k)$ is a partition of $V(T) \cup D$.
    From invariants (\ref{invariant:1}) and (\ref{invariant:3}) it follows that $0 \le |R_k| - |B_k| \le imb(T_k) = 0$ and that $R_i$ and $B_i$ are independent sets, respectively.
    Finally, from \Cref{lem:tree_half} we know that $\frac{|V(T_i)|}{2} \le \frac{|V(T_{i + 1})|}{2}$ for all $i = 0, 1, \ldots$, so $k \le \lceil\log{n}\rceil$ and therefore $|Y_k| \le \lceil\log{n}\rceil$
    Thus, $(R_k, B_k, Y_k)$ is a red-blue-yellow $(0, \lceil \log{n} \rceil)$-decomposition of $T \cup D$.
\end{proof}

\begin{algorithm}[ht]
\begin{algorithmic}[1]
  \Function{\texttt{RBY-decompose}}{$T$, $k$}
    \State $D \gets$ a set of dummy $k$ isolated vertices
    \State $T_0 \gets T$, $R_0 \gets D$, $B_0 \gets \emptyset$, $Y_0 \gets \emptyset$
    \For{$i = 0, 1, 2, \ldots$}
      \If{$V(T_i) == 0$}
        \State $R \gets R_i$, $B \gets B_i$, $Y \gets Y_i$
        \State \textbf{break}
      \EndIf
      \State Find a central vertex $v_i$ in $T_i$ using \Cref{lem:tree_half}
      \State $T'_1, \ldots, T'_d \gets$ subtrees of $T_i$ rooted in $v$ with nondecreasing $imb(T'_j)$
      \If{$|R_i| - |B_i| > \sum_{j = 1}^d imb(T'_j)$}
        \State $R \gets R_i \cup C_2(T_i)$, $B \gets B_i \cup C_1(T_i)$, $Y \gets Y_i$
        \State \textbf{break}
      \Else
        \State $R_i(0) \gets R_i$, $B_i(0) \gets B_i$
        \For{$j = 1, 2, \ldots, d - 1$}
          \State $R_i(j) \gets R_i(j - 1) \cup C_2(T'_j)$, $B_i(j) \gets B_i(j - 1) \cup C_1(T'_j)$
          \If{$|R_i(j)| < |B_i(j)|$}
            \State Swap $R_i(j)$ and $B_i(j)$
          \EndIf
        \EndFor
        \State $R_{i + 1} \gets R_i(d - 1)$, $B_{i + 1} \gets B_i(d - 1)$, $Y_{i + 1} \gets Y_i \cup \{v_i\}$, $T_{i + 1} \gets T'_d$
      \EndIf
    \EndFor
    \If{$|R \setminus D| < |B \setminus D|$}
      \State Swap $R$ and $B$
    \EndIf
    \State \Return $(R \setminus D, B \setminus D, Y)$
  \EndFunction 
\end{algorithmic}
\caption{Find a red-blue-yellow $(k, \lceil \log{n} \rceil)$-decomposition of $T$.}
\label{alg:rby}
\end{algorithm}

Note that the proof above can be directly translated to \Cref{alg:rby}.
In addition,
\begin{theorem}
    \Cref{alg:rby} with an appropriate preprocessing runs in $O(n)$ time.
\end{theorem}

\begin{proof}
    Observe that for a tree on $n$ vertices we can compute for every vertex $v$ and its neighbor $u$ functions $f(v, u)$ and $g(v, u)$ denoting the sizes of subsets of $C_1(T)$ and $C_2(T)$ restricted to the connected component containing $u$ in $T - v$. Moreover, it can be done in linear time: it is sufficient to root $T$ in an arbitrary vertex, compute values of all $f(v, u)$ and $g(v, u)$ when $u$ is a child of $v$ recursively, and then by another recursion get the missing values of $f(v, parent(v))$ and $g(v, parent(v))$.
    Note that this way we can compute both size of the component (equal to $f(v, u) + g(v, u)$) as well as its imbalance (equal to $f(v, u) - g(v, u)$) on request in constant time.

    Next, let us count the total number of jumps necessary for finding central vertices over all loops in \Cref{alg:rby}. As it was stated in the proof of \Cref{lem:tree_half}, while searching for a central vertex we always jump from a vertex to its neighbor in a way that decreases the largest remaining component by one. Thus, if in the next iteration we start at exactly the neighbor of the previous central vertex, there can be only $O(n)$ such jumps in total.

    Additionally, each jump can be made in constant time, provided that we sorted all neighbors of $v$ by their subtree sizes, that is, the values of $f(v, u) + g(v, u)$ -- and that we can do also at the beginning using e.g. bucket sorting in $\deg(v)$ time per vertex (so in $O(n)$ time for all the vertices). Later, it is then sufficient to update the sizes lazily: only when we jump to certain $v$ from $u$, we fix its values $f(v, u)$ and $g(v, u)$, and move it to the proper bucket in the sorted sequence.

    Note that the rest of the $i$-th iteration of the main loop in the algorithm takes time proportional to
    \begin{align*}
        \deg(v_i) + |R_{i + 1} \cup B_{i + 1} \cup Y_{i + 1}| - |R_i \cup B_i \cup Y_i| + O(1),
    \end{align*}
    so by telescoping sum over all (at most $\lceil\log{n}\rceil$) iterations we directly obtain that the total time is also clearly bounded by $\sum_i \deg(v_i) + |R \cup B \cup Y| + O(\log{n}) \le 3 |V(T)| + O(\log{n}) = O(n)$.
\end{proof}

For completeness, we note that \Cref{lem:rby-tree} can be extended to forests:
\begin{corollary}
    \label{lem:rby-forest}
    For a forest $F = \bigcup_{i = 1}^r T_i$ on $n$ vertices consisting of $r$ trees and any $k \in [0, \sum_{i = 1}^r imb(T_i)]$ there exists a red-blue-yellow $(k, \lceil \log{n} \rceil)$-decomposition of $F$.
\end{corollary}

In order to formulate our results for the $\lambda$-backbone coloring problem, we need a simple lemma, which enables us to color a clique with a forest backbone with a given set of colors:
\begin{lemma}
    \label{lem:recolor}
    Let $F$ be a forest on $n$ vertices with a partition $V(F) = A \cup B$ into disjoint independent sets $A$ and $B$.
    Let also $[a_1, a_2]$ and $[b_1, b_2]$ be intervals such that $a_2 - a_1 \ge |A| - 1$, $b_2 - b_1 \ge |B| - 1$, $a_1 + \lambda \le b_1$ and $a_2 + \lambda \le b_2$.
    
    Then, there exists a $\lambda$-backbone coloring $c$ of $K_n$ with backbone $F$ such that $c(v) \in [a_1, a_2]$ for every $v \in A$ and $c(v) \in [b_1, b_2]$ for every $v \in B$.
    Moreover, this coloring can be found in $O(n)$ time.
\end{lemma}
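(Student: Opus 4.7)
The plan is to translate the backbone condition into a position-alignment condition. For $v \in A$ I would set $\pi_A(v) = c(v) - a_1 \in \{0, \dots, a_2 - a_1\}$, and for $w \in B$ set $\pi_B(w) = c(w) - b_1 \in \{0, \dots, b_2 - b_1\}$. Writing $\tau := b_1 - a_1 - \lambda \ge 0$, the $\lambda$-backbone constraint $|c(v) - c(w)| \ge \lambda$ on an edge $\{v, w\} \in E(F)$ will follow from $\pi_A(v) \le \pi_B(w) + \tau$, and in that regime $c(v) \ne c(w)$ is automatic (the forbidden equality $\pi_A(v) = \pi_B(w) + b_1 - a_1$ would require a larger gap). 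So it will suffice to produce injections $\pi_A$ and $\pi_B$ obeying $\pi_A(v) \le \pi_B(w) + \tau$ on every backbone edge.

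Next I would record the key numerical observation $(b_2 - b_1) + \tau = b_2 - a_1 - \lambda \ge a_2 - a_1$, which follows from $a_2 + \lambda \le b_2$. It says that the $B$-range shifted by $\tau$ covers the whole $A$-range; combined with the strict inequalities $a_2 - a_1 > |A|$ and $b_2 - b_1 > |B|$, this provides enough slack to assign positions without collisions.

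I would then process the trees of $F$ independently, giving each its own disjoint sub-range of the $A$- and $B$-positions. For a single tree $T$, root it at an arbitrary vertex and perform a DFS. Maintain two counters $p_A$, $p_B$ pointing to the next unused positions in $T$'s sub-ranges; on visiting a vertex $u$, assign $\pi_A(u) = p_A$ or $\pi_B(u) = p_B$ depending on $u$'s side and increment the corresponding counter. The crucial step is to choose the DFS order so that for every tree edge $\{v, w\}$ the constraint $\pi_A(v) \le \pi_B(w) + \tau$ is automatic at the moment the second endpoint is processed --- for instance by processing all $A$-children of a $B$-vertex before descending into their subtrees, and vice versa.

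The hard part will be verifying the DFS invariant: that the two counters never force a violation of $\pi_A(v) \le \pi_B(w) + \tau$ on a tree edge and never overflow the allowed ranges $[0, a_2 - a_1]$ and $[0, b_2 - b_1]$. This is an induction on the DFS that uses the chained slack $a_2 - a_1 \ge |A| + 1$, $b_2 - b_1 \ge |B| + 1$, and $(b_2 - b_1) + \tau \ge a_2 - a_1$; the last of these is exactly what lets the $B$-counter keep up with the $A$-counter whenever a subtree is rich in $A$-vertices. Once the invariant is in hand, $c(v) = a_1 + \pi_A(v)$ and $c(w) = b_1 + \pi_B(w)$ yield the desired coloring, and the single DFS pass per tree gives the claimed $O(n)$ running time.
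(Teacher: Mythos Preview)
Your DFS-with-counters plan has a genuine gap: the invariant you hope to maintain does not hold under the ordering you suggest, and no monotone-counter DFS order seems to rescue it. Take $F$ to be a star with center $w \in B$ and $k \ge 2$ leaves $v_1,\ldots,v_k \in A$, with $\tau = 0$ (i.e.\ $b_1 = a_1 + \lambda$; this is compatible with the hypotheses, which then force $b_2 - b_1 \ge a_2 - a_1 > k$). Your edge constraint becomes $\pi_A(v_i) \le \pi_B(w)$ for every $i$. Any DFS that hands the first-visited $B$-vertex the value $p_B = 0$ gives $\pi_B(w) = 0$, after which the $k$ distinct nonnegative integers $\pi_A(v_1),\ldots,\pi_A(v_k)$ cannot all be $\le 0$. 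Your slack inequality $(b_2 - b_1) + \tau \ge a_2 - a_1$ does show there is room to set $\pi_B(w)$ large --- but an increasing counter cannot know in advance to do so, and ``processing all $A$-children before descending'' does not help, since $w$ is still assigned $0$. Reversing the direction of the $B$-counter fixes the star but then fails on a long alternating path (the two counters drift past each other near the middle).

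The paper's argument avoids all of this with a short leaf-peeling recursion rather than a traversal: pick any leaf together with its unique neighbor, assign the pair the two \emph{smallest} available colors $(a_1,b_1)$ if the leaf lies in $B$, or the two \emph{largest} available colors $(a_2,b_2)$ if the leaf lies in $A$, and recurse on the remaining forest with the correspondingly shrunk intervals. The choice of which end of the intervals to consume is dictated by the side of the current leaf, not by a traversal order --- this adaptivity is exactly what a monotone-counter DFS lacks. In the star example every leaf is in $A$, so every step pulls from the top, and $w$ automatically receives the largest $B$-color. (A separate, smaller issue: you verify $c(v) \ne c(w)$ only on backbone edges, but $K_n$ requires all pairs distinct, including $v \in A$, $w \in B$ with $\{v,w\}\notin E(F)$; when $[a_1,a_2]$ and $[b_1,b_2]$ overlap this needs its own argument.)
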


\begin{proof}
    If $F$ contains only isolated vertices, then we can assign in any order colors from $[a_1, a_2]$ to $A$ and from $[b_1, b_2]$ to $B$, as it clearly would be a $\lambda$-backbone coloring with the required properties.
    
    If $F$ has a leaf $v \in B$ with a neighbor $u \in A$, then we can assign $c(v) = b_1$, $c(u) = a_1$ and invoke a subproblem for $F' = F - \{u, v\}$, $A' = A \setminus \{u\}$, $B' = B \setminus \{v\}$ with the same coloring $c$ and color intervals $[a_1 + 1, a_2]$ and $[b_1 + 1, b_2]$. The solution for $F'$ would be consistent with coloring of $u$ and $v$, as all other neighbors of $u$ in $F$ would get colors at least $b_1 + 1 \ge \lambda + a_1 + 1 > \lambda + c(u)$.
    
    Otherwise, $F$ has a leaf $v \in A$ with a neighbor $u \in B$. We can assign $c(v) = a_2$, $c(u) = b_2$ and invoke a subproblem for $F' = F - \{u, v\}$, $A' = A \setminus \{v\}$, $B' = B \setminus \{u\}$ with the same coloring $c$ and color intervals $[a_1, a_2 - 1]$ and $[b_1, b_2 - 1]$. The solution for $F'$ would be consistent with coloring of $u$ and $v$, since all other neighbors of $u$ in $F$ would get colors at most $a_2 - 1 \le b_2 - 1 - \lambda < c(u) - \lambda$.

    The linear running time follows directly from the fact that we compute $c$ only once and we can pass additionally through recursion the lists of leaves and isolated vertices in an uncolored induced subtree. The total number of updates of these lists is proportional to the total number of edges in the tree, hence the claim follows. 
\end{proof}

The lemma above can be applied to the $\lambda$-backbone coloring problem directly.
Note that here we can extend the notation $C_1$, $C_2$ used before only for trees -- however this time it is relative to a $2$-coloring $c$, which in the case of forests may not be unique:
\begin{theorem}
    \label{thm:bbc-direct}
    Let $F$ be a forest on $n$ vertices and $c$ be a $2$-coloring of $F$ such that $|C_1(F)| \ge |C_2(F)|$, where $C_i(F) = \{v \in V(F)\colon c(v) = i\}$. 
    It holds that $BBC_{\lambda}(K_n, F) \le \max\{\lambda + |C_1(F)|, n\}$. Moreover, we can find a respective coloring in $O(n)$ time.
\end{theorem}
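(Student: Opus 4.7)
The plan is to reduce the statement directly to \Cref{lem:recolor}. Since $F$ is a forest, it is bipartite, so a $2$-coloring of $F$ can be computed in $O(n)$ time by BFS or DFS on each connected component, choosing labels component-by-component so that $|C_1(F)| \ge |C_2(F)|$ holds globally. This yields a partition $V(F) = C_1(F) \cup C_2(F)$ into two independent sets to which \Cref{lem:recolor} can be applied.

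Set $K = \max\{\lambda + |C_1(F)|, n\}$ and invoke \Cref{lem:recolor} with the \emph{smaller} class $C_2(F)$ placed at the bottom of the color spectrum on the interval $[1, |C_2(F)|]$ and the larger class $C_1(F)$ placed at the top on the interval $[K - |C_1(F)| + 1, K]$. These two intervals are disjoint because $K \ge n$ forces $K - |C_1(F)| + 1 \ge |C_2(F)| + 1$, so the resulting coloring uses $n$ distinct colors and is automatically proper on the whole clique $K_n$.

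The two separation conditions required by \Cref{lem:recolor} then boil down to $\lambda + |C_1(F)| \le K$, which holds by the very definition of $K$, and $\lambda + |C_2(F)| \le K$, which follows from $|C_2(F)| \le |C_1(F)|$ together with the previous inequality. Thus \Cref{lem:recolor} produces a $\lambda$-backbone coloring of $K_n$ with backbone $F$ using only colors from $[1, K]$ in $O(n)$ time, and together with the $O(n)$ preprocessing this gives the claimed linear running time and the bound $BBC_{\lambda}(K_n, F) \le \max\{\lambda + |C_1(F)|, n\}$.

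The only real design choice — and the thing to watch out for — is which bipartition class is placed at the bottom. Placing the smaller class low and the larger class high is precisely what makes the upper separation inequality $|C_2(F)| + \lambda \le \lambda + |C_1(F)| \le K$ hold automatically, giving the target bound exactly; reversing the roles would cost an additional $|C_1(F)| - |C_2(F)|$ colors and miss the claimed bound. Beyond this placement decision, the argument is purely bookkeeping about interval sizes and disjointness, so I do not anticipate any deeper obstacle.
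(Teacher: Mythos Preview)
Your proof is correct and essentially identical to the paper's: both place $C_2(F)$ on $[1,|C_2(F)|]$ and $C_1(F)$ on $[K-|C_1(F)|+1,K]$ with $K=\max\{\lambda+|C_1(F)|,n\}$ (the paper writes this as $[L+1,L+|C_1(F)|]$ with $L=\max\{\lambda,|C_2(F)|\}$, which is the same interval), and then invoke \Cref{lem:recolor}. Your added remark about why the smaller class must go at the bottom is a nice clarification that the paper leaves implicit.
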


\begin{proof}
    Let $L = \max\{\lambda, C_2(F)\}$.
    Let also $A = C_2(F)$, $B = C_1(F)$ with $[a_1, a_2] = [1, |C_2(F)|]$ and $[b_1, b_2] = [L + 1, L + |C_1(F)|$. These choices meet the conditions of \Cref{lem:recolor}, since $a_2 - a_1 = |A| - 1$, $b_2 - b_1 = |B| - 1$, $a_1 + \lambda \le a_1 + L = b_1$, and $a_2 + \lambda \le |C_2(F)| + L = |C_1(F)| + L = b_2$.
    
    Therefore, we directly infer that there exists a $\lambda$-backbone coloring $c$ using only colors no greater than $L + |C_1(F)|$.
\end{proof}

However, we can also couple \Cref{lem:recolor} with a red-blue-yellow decomposition to obtain a different bound:
\begin{theorem}
    \label{thm:bbc-decomposition}
    For a forest $F$ on $n$ vertices it holds that $BBC_{\lambda}(K_n, F) \le \max\{n, 2 \lambda\} + \Delta^2(F) \lceil\log{n}\rceil$. Moreover, we can find a respective coloring in $O(n)$ time.
\end{theorem}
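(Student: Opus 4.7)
The plan is to apply \Cref{lem:rby-forest} to obtain a red-blue-yellow decomposition $(R,B,Y)$ of $F$ with $|Y| \le \lceil \log n \rceil$, and then color the yellow vertices at the two extremes of the palette while using \Cref{lem:recolor} to fit $R$ and $B$ in the middle. First, I fix $k \in \{0,1\}$ so that $n - |Y| + k$ is even and invoke \Cref{lem:rby-forest} with that $k$; the degenerate case that no such $k$ lies in $[0, \sum_i imb(T_i)]$ means $F$ has essentially no edges and the claim follows immediately from \Cref{thm:bbc-direct}. With this choice $|R|$ and $|B|$ differ by at most one, so $\max\{|R|,|B|\} \le \lceil (n-|Y|)/2 \rceil$.

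Next, since $F[Y]$ is itself a forest I $2$-color it and let $Y_L, Y_R$ be the two color classes; then $F[Y_L]$ and $F[Y_R]$ contain no backbone edges, and every backbone edge with both endpoints in $Y$ runs between $Y_L$ and $Y_R$. I place $Y_L$ in a reserved block at the low end of the palette and $Y_R$ in a symmetric block at the high end. Inside each block the yellow vertices need only receive distinct colors (no $\lambda$-spacing is required \emph{within} $Y_L$ or $Y_R$), but I must additionally leave room to accommodate those neighbors of $Y_L$ (resp.\ $Y_R$) in $R$ or $B$ whose color would otherwise fall inside the forbidden interval $[c(y)+1, c(y)+\lambda-1]$ (resp.\ $[c(y)-\lambda+1, c(y)-1]$). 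Each $y \in Y$ has at most $\Delta(F)$ such neighbors, so at most $\Delta(F)\,|Y|$ vertices need to be ``lifted'' out of the forbidden zone at each end; a buffer of $\Delta(F)$ extra colors per yellow vertex is enough to host them, giving a total reservation of at most $\Delta^2(F) \lceil \log n \rceil$ across both ends.

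After these reservations, the middle block has length at least $\max\{n, 2\lambda\}$. I split it into two sub-intervals $[a_1, a_2]$ for $R$ and $[b_1, b_2]$ for $B$ with $b_1 \ge a_1 + \lambda$, $b_2 \ge a_2 + \lambda$, $a_2 - a_1 > |R|$ and $b_2 - b_1 > |B|$; the parity choice of $k$ together with the case analysis ``$\lambda \le (n-|Y|)/2$'' versus ``$\lambda > (n-|Y|)/2$'' is exactly what makes these constraints simultaneously satisfiable within $\max\{n, 2\lambda\}$ colors. Applying \Cref{lem:recolor} then completes the coloring of $R \cup B$ inside the middle block, and the \emph{between}-class condition in that lemma takes care of all backbone edges with one endpoint in $R$ and the other in $B$; the buffer construction takes care of every remaining backbone edge incident to $Y$.

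The main obstacle I anticipate is the bookkeeping: simultaneously verifying that the reservations at both ends sum to at most $\Delta^2(F) \lceil \log n \rceil$, that every backbone edge incident to $Y$ ends up with endpoints at color distance $\ge \lambda$ (which relies on both the $2$-coloring of $F[Y]$ and the buffer slots), and that the preconditions of \Cref{lem:recolor} remain valid on the remaining middle interval after the shifted $R$- and $B$-vertices have been committed to buffer positions. Linear running time then follows routinely: \Cref{lem:rby-forest} runs in $O(n)$ time, $2$-coloring $F[Y]$ takes $O(|Y|) = O(\log n)$ time, and \Cref{lem:recolor} is also linear.
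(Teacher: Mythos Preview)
Your overall strategy---RBY-decompose, place the two halves of $Y$ at the extremes of the palette, and colour $R\cup B$ in the middle via \Cref{lem:recolor}---is exactly the paper's. Two of your steps, however, do not work as written.

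First, $2$-colouring $F[Y]$ is too weak. The paper instead contracts every edge with both endpoints in $R\cup B$, obtaining a forest $F'$, and $2$-colours $F'$; the induced partition $Y=Y_1\cup Y_2$ then has the key property that no path whose interior lies in $R\cup B$ runs from $Y_1$ to $Y_2$. Without this you can have $y_1\in Y_L$, $y_2\in Y_R$ and a path $y_1\,b\,r\,y_2$ in $F$ with $b\in B$, $r\in R$: the edge $y_1 b$ forces $b$ to the top of $B$'s interval, which (via the edge $br$) forces $r$ to the top of $R$'s interval, while the edge $ry_2$ simultaneously forces $r$ to the bottom of $R$'s interval. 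Your $2$-colouring of $F[Y]$ does not exclude this configuration, since $y_1$ and $y_2$ are non-adjacent there.

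Second, your buffer count tracks only the first-order neighbours of $Y$ and never explains where the factor $\Delta^2$ comes from. In the paper's layout $Y_1\mid B\mid R\mid Y_2$, the neighbours $B_1\subseteq B$ of $Y_1$ (at most $\Delta|Y_1|$ vertices) are pushed to the top of $B$'s interval; but then \emph{their} neighbours $R_1\subseteq R$ (at most $(\Delta-1)|B_1|\le \Delta(\Delta-1)|Y_1|$ vertices) must in turn be pushed to the top of $R$'s interval, and it is this second cascade that consumes the extra $(\Delta^2-\Delta)|Y_1|$ colours between the $B$- and $R$-blocks. Your line ``a buffer of $\Delta(F)$ extra colors per yellow vertex'' yields only $\Delta|Y|$ slots (not $\Delta^2|Y|$) and leaves the cascade unhandled. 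Making the cascades from the two ends coexist without collision requires $R_1\cap R_2=\emptyset$ and $B_1\cap B_2=\emptyset$, which is precisely what the contraction-based $2$-colouring of the previous paragraph guarantees and what a $2$-colouring of $F[Y]$ alone cannot.
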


\begin{proof}
    Let $L = \max\left\{\frac{n}{2}, \lambda\right\}$.
    From \Cref{lem:rby-forest} we can get a red-blue-yellow $(0, \lceil\log{n}\rceil)$-decomposition $(R, B, Y)$ of $F$.
    Note that by construction it holds that $|R| = |B| \le L$.
    
    Observe that $Y$ is not necessarily an independent set. Therefore, we need to guarantee the correctness of the coloring for the edges with both endpoints in $Y$.
    We handle that in the following way: we obtain $F'$ from $F$ by contracting all edges with both endpoints in $R \cup B$. From an arbitrary $2$-coloring $c'$ of $F'$ we construct a partition of $Y = Y_1 \cup Y_2$ such that both $Y_i = \{v \in Y\colon c'(v) = i\}$.
    Without loss of generality assume that $|Y_1| \ge |Y_2|$.
    
    Let $\N_G(U)$ be a set of neighbors of vertices from $U$ in $G$, i.e.
    \begin{align*}
        \N_G(U) = \{v \in V(G)\colon \exists_{u \in U} \{u, v\} \in E(G)\}
    \end{align*}
    Let us introduce $B_1 = B \cap \N_F(Y_1)$ and $R_1 = R \cap \N_F(B_1)$. Let also $R_2 = R \cap \N_F(Y_2)$, $B_2 = B \cap \N_F(R_2)$ and $B^* = B \setminus (B_1 \cup B_2)$, $R^* = R \setminus (R_1 \cup R_2)$.
    Note that from these definitions it follows that $B_1 \cap B_2 = R_1 \cap R_2 = \emptyset$.
    If this was not the case, then it would mean that there exists a path $(u_1, v_1, v_2, u_2)$ in $F$ such that $u_1 \in Y_1$, $u_2 \in Y_2$ and $v_1, v_2 \in R \cup B$. However, this would mean that in $F'$ both $u_1$ and $u_2$ had to be assigned the same color -- so they could not be in $Y_1$ and $Y_2$, respectively.

    To simplify the notation we define $D = \Delta(F)$.
    We will proceed with the coloring as follows:
    \begin{itemize}
        \item vertices from $Y_1$ will get colors from $1$ to $|Y_1|$,
        \item vertices from $B$ will get colors from $D |Y_1| + 1$ to $L + D |Y_1|$,
        \item vertices from $R$ will get colors from $L + D^2 |Y_1| + 1$ to $2 L + D^2 |Y_1|$,
        \item vertices from $Y_2$ will get colors from $2 L + D^2 |Y_1| + (D - 1) |Y_2| + 1$ to $2 L + D^2 |Y_1| + D |Y_2|$.
    \end{itemize}.
    We will color $F$ by assigning colors to $Y_1$, $B_1$ and $R_1$ first, and then to $Y_2$, $R_2$ and $B_2$, symmetrically.
    Finally, we prove that we are left with the part of the set $R^* \cup B^*$ that can be colored using the remaining colors (see \Cref{fig:coloring} for the assignment of colors to the sets of vertices).

    \begin{figure}[ht]
        \centering
        \includegraphics[width=0.8\textwidth]{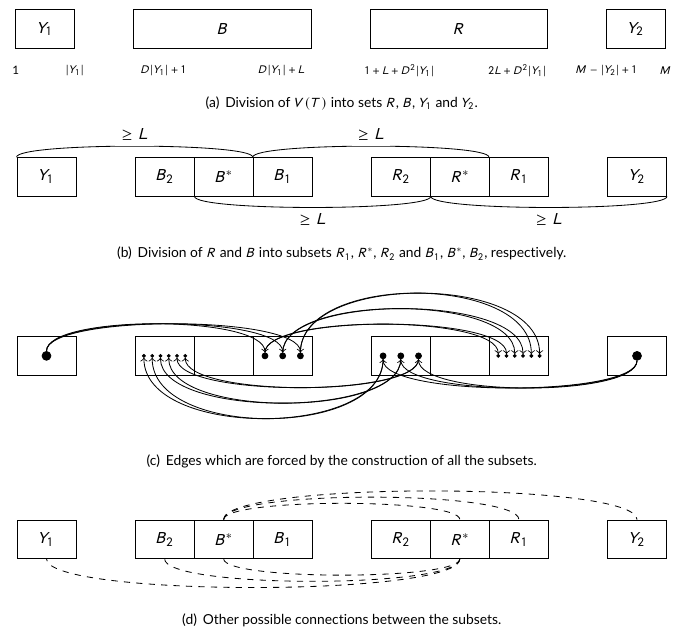}
        \caption{The main idea for the $\lambda$-backbone coloring based on a red-blue-yellow decomposition. For brevity, $M = 2 L + D^2 |Y_1| + D |Y_2|$ denotes the maximum color used.}
        \label{fig:coloring}
    \end{figure}
    
    First, let us arrange vertices from $Y_1$ in any order and assign them colors from $1$ to $|Y_1|$.
    Now, let us take vertices from $B_1$ and assign them the largest colors from the set $[D |Y_1| + 1, L + D |Y_1|]$ in the following way: the vertex with the largest colored neighbor in $Y_1$ gets $L + D |Y_1|$, second largest gets $L + D |Y_1| - 1$, etc. (ties are broken arbitrarily).
    Such a coloring ensures two things: first, all color constraints between $Y_1$ and $B_1$ are met, and second, the smallest color used for $B_1$ is at least $L + 1$.
    
    Similarly, we color $R_1$: we sort the vertices according to the largest colors of their neighbors in $B_1$ and assign the largest colors to the ones with the largest neighbor color.
    Thus we are certain that all color constraints between $B_1$ and $R_1$ are met and that the smallest color used for $R_1$ is at least $2 L + D |Y_1| + 1$, since every vertex from $B_1$ has at most $D - 1$ neighbors in $R_1$.
    
    Symmetrically, in the same fashion, we can color $Y_2$, $R_2$, and $B_2$, only this time assigning the smallest colors available for each set (see \Cref{fig:coloring}).
    
    Since there are at most $L$ vertices in both $R$ and $B$, we are sure that the sets of colors assigned to $B_1$ and $B_2$ (or $R_1$ and $R_2$, respectively) do not overlap.
    Furthermore, by construction, the only remaining vertices to be colored are exactly the ones from the set $R^* \cup B^*$.
    Note that any $v \in B^*$ can be adjacent only to some $u \in Y_2 \cup R_1$ (or to $u \in R^*$) -- but the largest color to be used for $B^*$ (certainly at most the largest color for $B$, i.e. $L + D |Y_1|$) is at least lower by $L$ from the smallest color used in $Y_2 \cup R_1$ (i.e. $2 L + D^2 |Y_1| - D (D - 1) |Y_1| = 2 L + D |Y_1|$).
    Again, the symmetrical argument goes for $R^*$ and $Y_1 \cup B_2$.

    By the argument above, the remaining colors for $B^*$ and $R^*$ form intervals $[b_1, b_2]$ and $[r_1, r_2]$ such that $b_1 + L \le r_1$, $b_2 + L \le r_2$. Thus, we are in position to color $F[R^* \cup B^*]$ with its disjoint independent sets $R^*$, $B^*$ and colors $[b_1, b_2]$, $[r_1, r_2]$ using \Cref{lem:recolor}.

    To obtain the total running time we first note that each of the initial steps -- obtaining $(R, B, Y)$ from \Cref{lem:rby-forest} (e.g. using \Cref{alg:rby}), contraction of $F$ into $F'$, and finding both $Y_1$ and $Y_2$ -- requires only linear time.
    Coloring $Y_1 \cup R_1 \cup B_1$ also requires $O(n)$ time, since we need to traverse each edge between these vertices only once to ensure the proper distances between the colors, and it is sufficient to use bucket sort to order vertices within $B_1$ and $R_1$. The same argument follows symmetrically for $Y_2 \cup R_2 \cup B_2$.
    Finally, \Cref{thm:bbc-direct} guarantees that the coloring of the remaining vertices from $R^* \cup B^*$ also can be found in $O(n)$ time.
\end{proof}

\begin{theorem}
    \label{thm:bbc-algorithm-tree}
    There exists an algorithm with running time $O(n)$ which for any forest $F$ on $n$ vertices with $\Delta(F) \ge 2$ finds a $\lambda$-backbone coloring such that $BBC_\lambda(K_n, F) \le OPT + \Delta^2(F) \lceil\log{n}\rceil$.
\end{theorem}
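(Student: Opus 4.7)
My plan is to run both algorithms provided by the preceding theorems in parallel: the direct coloring from Theorem~\ref{thm:bbc-direct} uses at most $A := \max\{\lambda + |C_1(T)|, n\}$ colors, and the decomposition-based coloring from Theorem~\ref{thm:bbc-decomposition} uses at most $B := \max\{n, 2\lambda\} + \Delta^2(T)\lceil\log n\rceil$ colors. The combined procedure simply returns whichever of the two colorings is better. Each individual algorithm runs in $O(n)$ time on a tree, so the total running time remains linear, and the remaining task is to show $\min(A, B) \le OPT + \Delta^2(T)\lceil\log n\rceil$, where $OPT := BBC_\lambda(K_n, T)$.

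I will perform a case split based on whether $|C_1(T)| \le \lambda$ or $|C_1(T)| > \lambda$. Both cases reduce to a lower bound that I plan to prove first: for any connected tree $T$ on $n \ge 2$ vertices, $OPT \ge \max\{n, \lambda + \min(\lambda, |C_1(T)|)\}$. Granted this bound, if $|C_1(T)| \le \lambda$, then $OPT \ge \max\{n, \lambda + |C_1(T)|\} = A$, so $A \le OPT$ already, while if $|C_1(T)| > \lambda$, then $OPT \ge \max\{n, 2\lambda\} = B - \Delta^2(T)\lceil\log n\rceil$, so $B \le OPT + \Delta^2(T)\lceil\log n\rceil$. Either way the required inequality is established.

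The main obstacle is proving the lower bound $OPT \ge \lambda + \min(\lambda, |C_1(T)|)$; the bound $OPT \ge n$ is immediate from the clique $K_n$ itself. My approach is to assume the existence of a valid coloring with maximum color $M < 2\lambda$ and deduce $M \ge \lambda + |C_1(T)|$. The crucial observation is that a vertex assigned a color $c$ in the middle range $(M - \lambda, \lambda]$ would force every neighbor's color into $[1, c - \lambda] \cup [c + \lambda, M]$, which is empty because $c - \lambda \le 0$ and $c + \lambda > M$. Since $T$ is a connected tree on at least two vertices, no vertex is isolated, hence no middle color is used. Every vertex thus receives a color in $L := [1, M - \lambda]$ or $H := [\lambda + 1, M]$, and any edge of $T$ must cross between $L$ and $H$ (an edge inside $L$ or inside $H$ would violate the $\lambda$-gap, since the span within either set is at most $M - \lambda - 1 < \lambda$). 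By the uniqueness of the bipartition of a connected tree, the partition of $V(T)$ induced by $L$ and $H$ coincides with $(C_1(T), C_2(T))$ up to swapping, so both parts have size at most $M - \lambda$; in particular $|C_1(T)| \le M - \lambda$, i.e. $M \ge \lambda + |C_1(T)|$, which is exactly what I need to close the argument.
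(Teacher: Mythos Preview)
Your proof is correct and follows essentially the same approach as the paper: run both the direct coloring (\Cref{thm:bbc-direct}) and the decomposition-based coloring (\Cref{thm:bbc-decomposition}), return the better one, and justify the additive bound via a lower bound on $OPT$. Your lower-bound argument is in fact more explicit than the paper's: where the paper simply asserts that the coloring from \Cref{thm:bbc-direct} is ``the best among all $\lambda$-backbone colorings $c$ such that $\lceil c/\lambda\rceil$ is a $2$-coloring of $T$'' and concludes $OPT \ge \min\{\lambda + |C_1(T)|, 2\lambda + 1\}$, you give a direct proof that any coloring with maximum color $M < 2\lambda$ must avoid the middle interval $(M-\lambda,\lambda]$, forcing the unique bipartition of $T$ into two sets of size at most $M-\lambda$ each, hence $M \ge \lambda + |C_1(T)|$; this is exactly the content behind the paper's unproven ``clearly''.
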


\begin{proof}
    First, we prove the theorem for tree backbones.
    Let us run the algorithms from \Cref{thm:bbc-direct} and \Cref{thm:bbc-decomposition} and return the better result.
    This ensures that
    \begin{align*}
        BBC_\lambda(K_n, T) & \le \min\{\max\{\lambda + |C_1(T)|, n\}, \max\{n, 2 \lambda\} + \Delta^2(T) \lceil\log{n}\rceil\} \\
        & \le \max\{\min\{\lambda + |C_1(T)|, 2 \lambda\}, n\} + \Delta^2(T) \lceil\log{n}\rceil.
    \end{align*}

    Let us also observe that if $\Delta(T) \ge 2$ (that is, if $T$ contains at least one edge) for the optimal $\lambda$-backbone coloring $c$ for a graph $K_n$ with backbone $T$ it holds that
    \begin{itemize}
        \item either $\left\lceil\frac{1}{\lambda} \max_{v \in V(T)} c(v)\right\rceil = 2$, so $\lceil\frac{c}{\lambda}\rceil$ is a $2$-coloring of $T$ -- then it is clear that $\max_{v \in V(T)} c(v) \ge \lambda + |C_1(T)|$,
        \item or $\left\lceil\frac{1}{\lambda} \max_{v \in V(T)} c(v)\right\rceil > 2$, but $\lceil\frac{c}{\lambda}\rceil$ has to be a valid coloring of $T$, but it has to use more than $2$ colors -- so $\max_{v \in V(T)} c(v) \ge 2 \lambda + 1$.
    \end{itemize}
    Therefore, it is true that $BBC_\lambda(K_n, T) \ge \min\{\lambda + |C_1(T)|, 2 \lambda + 1\}$.
    And of course we have to use a different color for each vertex, so $BBC_\lambda(K_n, T) \ge n$ -- thus $BBC_\lambda(K_n, T) \ge \max\{\min\{\lambda + |C_1(T)|, 2 \lambda + 1\}, n\}$.
    Combining all these bounds we obtain the desired result.

    To achieve the same result for forest backbones we only need to add some edges that would make the backbone connected and spanning. However, we can always make a forest connected by adding edges between some leaves and isolated vertices and we will not increase the maximum degree of the forest, as long as $\Delta(F) \ge 2$.
\end{proof}

\section{Complete graph with a tree or forest backbone: a lower bound}
\label{sec:negative}

In this section, we prove that there exists a family of trees with the maximum degree $3$ for which $BBC_\lambda(K_n, T) \ge \max\{n, 2 \lambda\} + \Theta(\log{n})$ in four steps.
First, we will define the family of trees with parameters directly related to the Fibonacci numbers. 
Next, we will show that the existence of a red-blue-yellow $(k, l)$-decomposition for such trees implies the existence of another red-blue-yellow $(k', l)$-decomposition with an additional property.
Then, we will prove that any such de\-com\-po\-si\-tion would imply further that there holds a certain decomposition of some large number (i.e. half of the large Fibonacci number) into a sum and a difference of a small number of Fibonacci numbers.
Finally, we will establish that such a decomposition cannot exist -- therefore there is no red-blue-yellow $(k, l)$-decomposition for our trees assuming that $k$ and $l$ are proportional to $\log{n}$.

We begin by connecting our red-blue-yellow $(k, l)$-decomposition to the $\lambda$-backbone coloring problem:
\begin{theorem}
    \label{thm:impossibility}
    Let $T$ be a tree on $n$ vertices. Let also $\lambda \ge 2$ and $l$ be any positive integer such that $2 \lambda + l \ge n$.
    If $BBC_\lambda(K_n, T) \le 2 \lambda + l$, them $T$ has a red-blue-yellow $(k, l)$-decomposition for some $k \in [0, \min\{\lambda - 1, 2 \lambda - n + l\}]$.
\end{theorem}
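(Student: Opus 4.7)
The plan is to argue by contrapositive: start from a $\lambda$-backbone coloring $c \colon V(T) \to [1, 2\lambda + l]$ and extract a red-blue-yellow $(k, l)$-decomposition with $k$ in the prescribed range. The natural construction slices the palette into three consecutive bands---low $[1, \lambda]$, middle $[\lambda + 1, \lambda + l]$, and high $[\lambda + l + 1, 2\lambda + l]$---and takes $R$, $Y$, $B$ to be their preimages under $c$. Two properties are essentially automatic: $|Y| \le l$, because $c$ is injective on $V(K_n)$ and the middle band contains only $l$ values; and both $R$ and $B$ are independent in $T$, because within a band of width $\lambda$ any two distinct colors differ by at most $\lambda - 1$, strictly less than what the backbone constraint demands on any edge of $T$. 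After swapping $R$ and $B$ if needed I may assume $|R| \ge |B|$ and set $k_0 = |R| - |B| \ge 0$.

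Next I would derive the two upper bounds on $k_0$ directly from cardinalities. From $|R| \le \lambda$ one gets $k_0 \le \lambda$, and combining $|R| \le \lambda$, $|Y| \le l$, and $|R| + |Y| + |B| = n$ yields $k_0 = 2|R| + |Y| - n \le 2\lambda + l - n$. So as soon as $k_0 \le \lambda - 1$, the triple $(R, B, Y)$ itself is the desired $(k_0, l)$-decomposition with $k_0 \in [0, \min\{\lambda - 1, 2\lambda - n + l\}]$, and the proof is complete.

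The main obstacle I anticipate is the boundary case $k_0 = \lambda$, where the extracted parameter is exactly one unit above the prescribed range. In this situation $|R| = \lambda$, $|B| = 0$, and hence $n = \lambda + |Y| \le \lambda + l$. The remedy is to move a single vertex $v \in R$ into $B$: subsets of the independent set $R$ stay independent, a singleton is trivially independent, and $Y$ is untouched, so $(R \setminus \{v\}, \{v\}, Y)$ is still a partition of $V(T)$ into two independent sets and a middle set of size at most $l$, now with imbalance $\lambda - 2$. The standing assumption $\lambda \ge 2$ together with $n \le \lambda + l$ (derived above) places $\lambda - 2$ inside $[0, \min\{\lambda - 1, 2\lambda - n + l\}]$, so the modified triple is the required decomposition.
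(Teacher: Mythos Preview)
Your proof is correct and follows the same contrapositive-and-slice strategy as the paper, differing only in where you place the $Y$ band (the middle of the palette rather than the top). You are in fact more careful than the paper: you explicitly handle the boundary case $k_0 = \lambda$ by relocating a single vertex, whereas the paper's proof simply asserts $k' \le \lambda - 1$ without addressing that this can fail when one extreme band is full and the other empty.
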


\begin{proof}
    Suppose that $T$ is a tree that does not have any red-blue-yellow decomposition for any $k \in [0, \min\{\lambda - 1, 2 \lambda - n + l\}]$, but at the same time it is true that $BBC_\lambda(K_n, T) \le 2 \lambda + l$ with some optimal $\lambda$-backbone coloring $c$.
    
    Now we can define the following sets:
    \begin{itemize}
        \item $R = \{v \in V(T)\colon 1 \le c(v) \le \lambda\}$,
        \item $B = \{v \in V(T)\colon \lambda + 1 \le c(v) \le 2 \lambda\}$,
        \item $Y = \{v \in V(T)\colon c(v) > 2 \lambda\}$.
    \end{itemize}
    Since all vertices in $c$ have different colors, it is true that $|Y| \le l$. Moreover, the optimality of $c$ implies that both $R$ and $B$ are non-empty. From the fact that $c$ is a coloring of $K_n$ it follows that both $R$ and $B$ contain at most $\lambda$ vertices since all the vertices have different colors.
    And by the definition of the backbone coloring, $R$ and $B$ have to be independent sets in $T$.
    
    Clearly, it holds that $k' = ||R| - |B|| \le \lambda - 1$.
    Moreover, since by a counting argument both $R$ and $B$ have to contain at least $n - \lambda - |Y|$ vertices (otherwise the other one would have more than $\lambda$ vertices), we know that $k' \le \lambda - (n - \lambda - |Y|) \le 2 \lambda - n + l$.
    Thus, either $(R, B, Y)$ or $(B, R, Y)$ is a red-blue-yellow $(k', l)$-decomposition of $T$ with $k' \in [0, \min\{\lambda - 1, 2 \lambda - n + l\}]$---and we obtained a contradiction.
\end{proof}

Now we define a family of \emph{Fibonacci trees}, built recursively:
\begin{definition}
    We call a rooted tree the $N$-th Fibonacci tree (denoted as $T^F_N$) if:
    \begin{itemize}
        \item $T^F_1 = T^F_2 = K_1$,
        \item $T^F_N$ for $N \ge 3$ is a tree with root $u_1$ with only child $u_2$ such that $u_2$ has two subtrees, which are, respectively, $(N - 1)$-th and $(N - 2)$-th Fibonacci trees.
    \end{itemize}
\end{definition}
We will ultimately show that this class contains infinitely many trees (starting from certain $N_0$), presented in \Cref{fig:fibonacci_trees}, cannot be colored using less than $\max\{n, 2\lambda\} + \frac{1}{48} \log_\phi{n} - 3$ colors with $\phi = \frac{1 + \sqrt{5}}{2}$.

\begin{figure}
    \centering
    \includegraphics[width=0.8\textwidth]{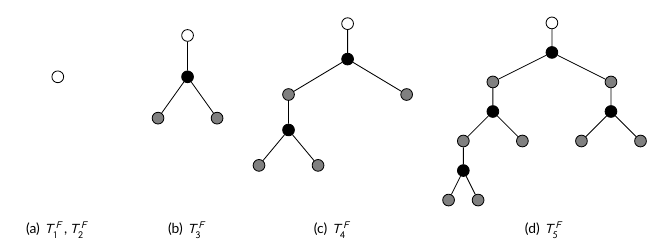}
    \caption{Fibonacci trees for $N = 1, 2, 3, 4, 5$. White vertices are the roots of Fibonacci trees and gray vertices are the roots of their Fibonacci subtrees.}
    \label{fig:fibonacci_trees}
\end{figure}


Here we will also make use of $C_1(T^F_N)$ and $C_2(T^F_N)$ sets. Without a loss of generality from now on we always define these sets with respect to a $2$-coloring $c$ of $T^F_N$ where the root of $T^F_N$ is colored with $1$.

\begin{proposition}
    \label{fact:fibonacci_subtrees}
    For every $N$ each vertex of $T^F_N$ is either the root or the parent of the root of some $T^F_i$ subtree for some $i \in [1, N]$.
\end{proposition}

\begin{theorem}
    \label{thm:fibonacci_order}
    If $T^F_N$ is the $N$-th Fibonacci tree, then $|V(T^F_N)| = 3 F_N - 2$, where $F_N$ is the $N$-th Fibonacci number.
    
    Moreover, for $n \ge 10$ it holds that $N = \lfloor\log_\phi{n}\rfloor$ (with $\phi = \frac{1 + \sqrt{5}}{2}$).
\end{theorem}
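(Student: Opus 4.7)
The plan for the cardinality claim $|V(T^F_N)| = 3 F_N - 2$ is a direct induction on $N$ that mirrors the recursive construction of $T^F_N$. The base cases $N = 1, 2$ hold since $T^F_1 = T^F_2 = K_1$ and $F_1 = F_2 = 1$. For the inductive step $N \ge 3$, the definition attaches the two subtrees $T^F_{N-1}$ and $T^F_{N-2}$ to $u_2$ and adds two fresh vertices $u_1, u_2$, so $|V(T^F_N)| = 2 + |V(T^F_{N-1})| + |V(T^F_{N-2})|$. Substituting the inductive hypothesis and using the Fibonacci recurrence $F_{N-1} + F_{N-2} = F_N$ immediately yields $3 F_N - 2$.

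For the second claim, I would combine the formula just proved with Binet's identity $F_N = (\phi^N - \psi^N)/\sqrt{5}$, where $\psi = -1/\phi$ satisfies $|\psi| < 1$, and then establish the two-sided bound $\phi^N \le n < \phi^{N+1}$, from which $\lfloor \log_\phi n \rfloor = N$ follows at once. The upper bound rearranges, after substituting Binet's formula, to $\phi^N(\phi - 3/\sqrt{5}) > -2 - 3\psi^N/\sqrt{5}$; since $\phi > 3/\sqrt{5}$ the left-hand side is positive and grows exponentially, while the right-hand side stays bounded, so the inequality is comfortable and holds for every $N \ge 1$.

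The lower bound $\phi^N \le 3 F_N - 2$ is the more delicate half. It rearranges to $\phi^N(3/\sqrt{5} - 1) \ge 2 + 3 \psi^N/\sqrt{5}$, and since $3/\sqrt{5} - 1 \approx 0.342$, a short numerical check shows this first holds at $N = 5$, which corresponds to $n = 3F_5 - 2 = 13 \ge 10$. Thus the hypothesis $n \ge 10$ is exactly what rules out the small trees $T^F_1, \ldots, T^F_4$ where the identity would fail (for instance $n = 4$ gives $\lfloor \log_\phi 4 \rfloor = 2 \ne 3$). The main \emph{obstacle}, if one can call it such, is keeping track of the $\psi^N$ error term carefully enough to pin the threshold down to $N = 5$ rather than a cruder bound; everything else is routine algebra.
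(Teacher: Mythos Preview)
Your induction for $|V(T^F_N)| = 3F_N - 2$ is identical to the paper's. For the second claim both arguments ultimately establish $\phi^N \le n < \phi^{N+1}$ via Binet's formula, but the execution differs: the paper works throughout in logarithms, bounding $\log_\phi\bigl(\phi^N - (-\phi)^{-N}\bigr)$ using the Taylor-type estimates $\ln(1\pm x) \lessgtr \pm x \pm x^2$ and separately controlling $\log_\phi\bigl(\tfrac{\sqrt{5}}{3}(n+2)\bigr) - \log_\phi n$, whereas you compare $n$ directly with $\phi^N$ and $\phi^{N+1}$ and reduce everything to two explicit inequalities in $\phi^N$ and $\psi^N$. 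Your route is a bit more elementary (no logarithmic inequalities are needed), while the paper's route makes the slack in each direction more quantitatively visible. One small slip: your lower-bound inequality $\phi^N(3/\sqrt{5}-1) \ge 2 + 3\psi^N/\sqrt{5}$ actually already holds at $N = 4$ (the two sides are roughly $2.34$ versus $2.20$, and indeed $\lfloor\log_\phi 7\rfloor = 4$), not first at $N = 5$; but since the hypothesis $n \ge 10$ forces $N \ge 5$ anyway, this does not affect the validity of your argument.
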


\begin{proof}
    The first part of the theorem obviously holds for $N = 1, 2$.
    If we assume that the theorem holds for all $i \in [1, N - 1]$ for some $N \ge 3$, then
    \begin{align*}
        |V(T^F_N)|
          & = |V(T^F_{N - 1})| + |V(T^F_{N - 2})| + 2 = 3 F_{N - 1} - 2 + 3 F_{N - 2} - 2 + 2 \\
          & = 3 F_N - 2.
    \end{align*}
    
    The second part of the theorem can be proved in the following way: $n = 3 F_N - 2$ implies that $\frac{\sqrt{5}}{3} (n+2) = \phi^N - (-\phi)^{-N}$.
    Therefore, we know that for $n \ge 2$ ($N \ge 3$) it holds that
    \begin{align*}
        \log_\phi\left(\phi^N - (-\phi)^{-N}\right)
              \ge N + \log_\phi\left(1 - \phi^{-2 N}\right)
              \ge N - \frac{\phi^{-2 N}}{\ln{\phi}} - \frac{\phi^{-4 N}}{\ln{\phi}} > N - \frac{1}{5},
    \end{align*}
    where we used the fact that $\ln(1 - x) \ge -x - x^2$ for $x \in (0, 0.68)$.
    
    On the other hand,
    \begin{align*}
        \log_\phi\left(\phi^N - (-\phi)^{-N}\right)
              \le N + \log_\phi\left(1 + \phi^{-2 N}\right)
              \le N + \frac{\phi^{-2 N}}{\ln{\phi}} + \frac{\phi^{-4 N}}{\ln{\phi}} < N + \frac{1}{5},
    \end{align*}
    where we used the fact that $\ln(1 + x) \le x + x^2$ for all $x > 0$.
    
    Finally, for $n \ge 10$ ($N \ge 5$) it holds that
    \begin{align*}
        \log_\phi\left(\frac{\sqrt{5}}{3} (n+2)\right)
              & = \log_\phi{n} + \log_\phi\left(\frac{\sqrt{5}}{3}\right) + \log_\phi\left(1 + \frac{2}{n}\right) \\
              & \le \log_\phi{n} - \frac{3}{5} + \frac{2}{n \log{\phi}} + \frac{4}{n^2 \log{\phi}} < \log_\phi{n} - \frac{1}{5},
    \end{align*}
    and $\log_\phi\left(\frac{\sqrt{5}}{3} (n+2)\right) > \log_\phi{n} + \log_\phi\left(\frac{\sqrt{5}}{3}\right) > \log_\phi{n} - \frac{4}{5}$.
    
    By putting all these bounds together we obtain that $N < \log_\phi{n} < N + 1$, which completes the proof.
\end{proof}

\begin{theorem}
    \label{thm:fibonacci_imb}
    If $T^F_N$ is the $N$-th Fibonacci tree, then $imb(T^F_N) = F_N$, where $F_N$ is $N$-th Fibonacci number.
\end{theorem}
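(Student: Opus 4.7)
The plan is to prove the identity by straightforward induction on $N$, using the recursive definition of $T^F_N$ together with the convention that the $2$-coloring $c$ places color $1$ at the root.

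For the base cases $N = 1, 2$, we have $T^F_N = K_1$, so $|C_1(T^F_N)| = 1$ and $|C_2(T^F_N)| = 0$, giving $imb(T^F_N) = 1 = F_1 = F_2$. For the inductive step with $N \ge 3$, I would unfold the recursive structure: the root $u_1$ of $T^F_N$ has color $1$, its unique child $u_2$ is then forced to have color $2$, and the two children of $u_2$ — which by construction are the roots of the subtrees $T^F_{N-1}$ and $T^F_{N-2}$ — must then have color $1$. This exactly matches the convention used to define $C_1$ and $C_2$ for those subtrees, so the restriction of $c$ to each subtree agrees (up to isomorphism) with the subtree's own canonical $2$-coloring.

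Counting vertices by color then gives
\begin{align*}
|C_1(T^F_N)| &= 1 + |C_1(T^F_{N-1})| + |C_1(T^F_{N-2})|, \\
|C_2(T^F_N)| &= 1 + |C_2(T^F_{N-1})| + |C_2(T^F_{N-2})|,
\end{align*}
where the $1$'s account for $u_1$ and $u_2$, respectively. Subtracting these two identities yields
\begin{align*}
imb(T^F_N) = imb(T^F_{N-1}) + imb(T^F_{N-2}) = F_{N-1} + F_{N-2} = F_N
\end{align*}
by the inductive hypothesis and the Fibonacci recurrence. In particular $imb(T^F_N) \ge 0$ at every step, so the convention $|C_1(T^F_N)| \ge |C_2(T^F_N)|$ used in the definition of the imbalance number is consistent with our choice of placing color $1$ at the root.

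There is essentially no hard step here; the only subtlety worth highlighting is the bookkeeping that the canonical $2$-coloring of each Fibonacci subtree is compatible with the $2$-coloring induced by $c$ on the larger tree. This follows from the fact that the Fibonacci subtrees of $T^F_N$ are rooted at the children of $u_2$, and hence at color-$1$ vertices, matching the convention used when defining $C_1$ and $C_2$ for the subtrees themselves.
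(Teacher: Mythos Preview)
Your proof is correct and follows essentially the same approach as the paper's: induction on $N$, with the key observation that the roots of the two Fibonacci subtrees receive the same color as the root of $T^F_N$, so the imbalances add and satisfy the Fibonacci recurrence. Your version is slightly more explicit in writing out the $|C_1|$ and $|C_2|$ counts and in justifying the compatibility of the $2$-colorings, but the argument is the same.
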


\begin{proof}
    Clearly $imb(T^F_N) = F_N$ for $N = 1, 2$.
    
    Let $N \ge 3$ and proceed by induction. By construction, we know that the grandchildren of the root of $T^F_N$ are exactly the roots of $T^F_{N - 1}$ and $T^F_{N - 2}$ and they all have the same color in any $2$-coloring of $T^F_N$.
    Therefore,
    \begin{align*}
        imb(T^F_N)
          & = imb(T^F_{N - 1}) + imb(T^F_{N - 2}) = F_{N - 1} + F_{N - 2} = F_N.
    \end{align*}
\end{proof}

Throughout the remainder of the paper we denote by $parent(v)$ a vertex directly above $v$ in $T^F_N$. For consistency, we assume that if $r$ is a root of $T^F_N$, then $parent(r)$ is properly defined so that the predicate $parent(r) \in A$ is false for any set $A$.

Observe that the set of vertices which are the roots of all Fibonacci subtrees is defined exactly by $C_1(T^F_N)$, since the root of $T^F_N$ is in $C_1(T^F_N)$.
Therefore, using \Cref{fact:fibonacci_subtrees} we can modify any red-blue-yellow $(k, l)$-decomposition in such a way that $Y \cap C_1(T^F_N) = \emptyset$:
\begin{lemma}
    \label{lem:pushdown}
    Let $(R, B, Y)$ be a red-blue-yellow $(k, l)$-decomposition of a Fibonacci tree $T^F_N$.
    There exists $0 \le k' \le k + 2 l$ such that we can construct a red-blue-yellow $(k', l)$-decomposition $(R', B', Y')$ of $T^F_N$ with $Y' \cap C_1(T^F_N) = \emptyset$.
\end{lemma}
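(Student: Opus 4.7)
The plan is to process the vertices of $Y \cap C_1(T^F_N)$ one by one, relocating each into $R$ or $B$ and, when necessary, pushing one of its neighbors into $Y$ in exchange. The key structural input is that, by \Cref{fact:fibonacci_subtrees} combined with the remark that $C_1(T^F_N)$ is precisely the set of roots of all Fibonacci subtrees, every vertex $v \in C_1(T^F_N)$ has at most two neighbors in $T^F_N$: its parent in the outer tree (if any) and the unique $u_2$-child inside the Fibonacci subtree that $v$ roots (if that subtree has at least three vertices). All such neighbors lie in $C_2(T^F_N)$.

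For each $v \in Y \cap C_1$ taken in any order, we inspect the current location of the neighbor set $N(v)$ and apply exactly one of the following rules:
\begin{itemize}
    \item if $N(v) \cap R = \emptyset$, move $v$ from $Y$ to $R$;
    \item else if $N(v) \cap B = \emptyset$, move $v$ from $Y$ to $B$;
    \item else $|N(v)| = 2$ with one neighbor in $R$ and the other in $B$; pick any $w \in N(v) \cap B$, move $w$ from $B$ to $Y$ and then move $v$ from $Y$ to $B$.
\end{itemize}
In each rule, $v$ ends up in an independent set from which all of its neighbors have been removed, so $R$ and $B$ remain independent. Two distinct vertices of $Y \cap C_1$ are never adjacent (same class of the bipartition of $T^F_N$), so separate relocations do not interfere. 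Every vertex ever pushed into $Y$ lies in $C_2(T^F_N)$, so once a $C_1$-vertex leaves $Y$ it never returns, and hence $Y' \cap C_1(T^F_N) = \emptyset$ at the end of the procedure.

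For the accounting, the first two rules strictly shrink $Y$ while the third is a one-for-one exchange between $Y$ and $B$, so $|Y'| \le |Y| \le l$. Each rule changes $|R| - |B|$ by at most $1$ in absolute value, and we perform at most $|Y \cap C_1| \le l$ steps, so the final difference $\Delta = |R_{\mathrm{end}}| - |B_{\mathrm{end}}|$ satisfies $|\Delta - k| \le l$. If $\Delta \ge 0$, we set $(R', B') = (R_{\mathrm{end}}, B_{\mathrm{end}})$ and $k' = \Delta \le k + l$; otherwise we swap the two labels to obtain $(R', B') = (B_{\mathrm{end}}, R_{\mathrm{end}})$ and $k' = -\Delta \le l - k \le l \le k + 2l$ (using $k \ge 0$). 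In either case $0 \le k' \le k + 2l$ as required. The main thing to be careful about in the plan is precisely the third rule: it is what guarantees that each step introduces at most a single new $C_2$ vertex into $Y$, which is what keeps the $|Y|$ budget respected throughout the entire process.
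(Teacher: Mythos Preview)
Your proof is correct and rests on the same structural observation as the paper's: every vertex of $C_1(T^F_N)$ has degree at most two, with both neighbors in $C_2(T^F_N)$, so each such vertex can be evicted from $Y$ at the cost of at most one swap. The mechanics differ, however. The paper performs a single global push-down: it sets $Y' = \{v \in C_2 : v \in Y \text{ or } parent(v) \in Y\}$, i.e.\ it unconditionally replaces every $C_1$-vertex in $Y$ by its (unique) $C_2$-child, and then reassigns colors to the freed $C_1$-vertices based on their parents. Your procedure is instead a local greedy repair that only introduces a new $C_2$-vertex into $Y$ when genuinely forced (rule~3), and otherwise simply shrinks $Y$.

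This buys you a cleaner accounting: because each of your at most $|Y \cap C_1| \le l$ steps alters $|R|-|B|$ by at most one, you in fact obtain $k' \le k + l$ in the case $\Delta \ge 0$ (and $k' \le l$ otherwise), which is tighter than the $k + 2l$ the paper derives from its coarser bounds $|R'| \le |R| + |Y|$ and $|B'| \ge |B| - |Y|$. The paper's one-shot definition is more declarative and avoids the sequential analysis, but your argument that independence is preserved step by step is arguably more transparent than the paper's case split. Both routes are valid; yours is slightly more economical.
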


\begin{proof}
    First, let us define $Y' = \{v \in C_2(T^F_N)\colon v \in Y \text{ or } parent(v) \in Y\}$, i.e. let us ``push down'' the yellow color from roots of Fibonacci subtrees in $T^F_N$ to their children. Clearly, $Y' \cap C_1(T^F_N) = \emptyset$ and moreover $|Y'| \le |Y|$ since the construction implicitly defines a function from $Y$ into $Y'$. 
    
    Next, we define $R'$ and $B'$ to be as following:
    \begin{itemize}
        \item if $v \in C_2(T^F_N)$, $v \notin Y$ and $parent(v) \notin Y$, then we preserve the color: if $v \in R$, then we put $v$ in $R'$, otherwise we put it in $B'$,
        \item if $v \in C_1(T^F_N)$ and $v \notin Y$, then we also preserve the color of $v$,
        \item if $v \in C_1(T^F_N)$, $v \in Y$ and $parent(v) \notin Y$, then we use the color different than in the parent: if $parent(v) \in R$, then we add $v$ to $B'$, otherwise we add $v$ to $R'$,
        \item finally, if $v \in C_1(T^F_N)$, $v \in Y$ (so by definition the only child of $v$ is in $Y'$) and $parent(v) \in Y$, we put $v$ in $R'$.
    \end{itemize}
    Intuitively, this whole procedure ``extends'' sets of blue and red vertices to the previously yellow vertices $v \in Y \cap C_1(T^F_N)$: if $parent(v)$ was blue (respectively, red), we use red (respectively, red) to color $v$. 
    And if we are certain that both a parent and an only child of a vertex in $C_1(T^F_N)$ will be in $Y'$ (the last case above), then we just assign it to $R'$, as it will be surrounded only by the vertices from $Y'$ (in fact, it could be also added to $B'$ instead).
    
    Note that $|R'| \le |R| + |Y|$ as only in the last two cases we can add some vertices to $R'$ which were not in $R$ -- but this means that these vertices were in $Y$.
    Additionally, $|B'| \ge |B| - |Y'| \ge |B| - |Y|$, as the only vertices which are in $B \setminus B'$ are exactly the ones which are added to $Y'$.
    
    Finally, we verify that indeed $(R', B', Y')$ or $(B', R', Y')$ is a red-blue-yellow $(k', l)$-decomposition of $T^F_N$ for some
    \begin{align*}
        k' = ||R'| - |B'|| \le ||R| - |B| + 2 |Y|| \le k + 2 l,
    \end{align*}
    because $R'$ and $B'$ are independent sets and $|Y'| \le |Y| \le l$.

    To see that $R'$ is an independent set, let us note that if there were some $v, parent(v) \in R'$, then $v, parent(v) \notin B$, since we never swap between red and blue colors in the procedure above.
    Thus we could have only five cases:
    \begin{itemize}
        \item $v, parent(v) \in Y$ -- but since one of them is in $C_2(T_N^F)$, it has to be also in $Y'$, not in $R'$,
        \item $v \in Y \cap C_1(T_N^F)$, $parent(v) \in R$ -- but then $v$ would be rather in $B'$, not in $R'$.
        \item $v \in Y \cap C_2(T_N^F)$, $parent(v) \in R$ -- thus $v$ would be in $Y'$, not in $R'$,
        \item $v \in R \cap C_1(T_N^F)$, $parent(v) \in Y$ -- then by construction $parent(v)$ would be in $Y'$, not in $R'$,
        \item $v \in R \cap C_2(T_N^F)$, $parent(v) \in Y$ -- and $v$ would be in $Y'$, not in $R'$.
    \end{itemize}
    Either way, we would obtain a contradiction.
    Note that the same argument can be also used to show that $B'$ is an independent set, which concludes the proof.
\end{proof}

Now, we turn to the computation of the imbalances of red-blue parts of Fibonacci trees.
In order to do that, we have to recall some theory of Fibonacci numbers.
In particular, by the Zeckendorf theorem \cite{knuth1988fibonacci} we know that any number $n \in \mathbb{N}$ can be written as $n = \sum_j z_j F_j$ for $z_j \in \{0, 1\}$ such that
\begin{enumerate}
    \item[(A)] no two consecutive $z_j$ are equal to $1$, i.e. $z_j z_{j + 1} = 0$ for all $j$,
    \item[(B)] this representation is unique and it is the decomposition with the smallest value of $\sum_j |z_j|$,
    \item[(C)] it can be constructed by greedy subtraction of the largest possible Fibonacci numbers from $n$ until we reach $0$.
\end{enumerate}
We call a vector $(z_j)_{j = 1}^\infty$ such that $(A)$--$(C)$ holds the \emph{Zeckendorf representation} of $n$ and denote it by $Z(n)$.

In fact, we can generalize the property $(B)$ to $(B')$ and show that for any $n = \sum_j z'_j F_j$ with $z'_j \ge 0$ it has to be the case that $\sum_j |z'_j| \ge \sum_j |z_j|$ for $Z(n) = (z_j)_{j = 1}^\infty$.

Without loss of generality we trim $Z(n)$ to a finite number of positions with a leading one since $F_i > n$ implies trivially $z_i = 0$.
For example, $Z(F_N) = (0, \ldots, 0, 1)$, i.e. it has a single one on $N$-th position and $Z\left(\frac{F_N}{2}\right) = (\ldots, 1, 0, 0, 1)$, i.e. it has ones only on every third position ending at $N - 2$ as it is true that $\frac{F_N}{2} = F_{N - 2} + \frac{F_{N - 3}}{2}$.

Now we may proceed to the lemma which ties tree decompositions and Fibonacci numbers:
\begin{lemma}
    \label{lem:count}
    For any red-blue-yellow $(k, l)$-decomposition $(R, B, Y)$ of $T^F_N$ such that $Y \cap C_1(T^F_N) = \emptyset$ and $k \ge 0$ it holds that every connected component $T_i$ of $T^F_N[R \cup B]$ has
    \begin{align*}
        imb(T_i) = |Y_i| + \sum_{j = 1}^r z_{i, j} F_j
    \end{align*}
    where $r \in [1, N]$, $Y_i = \{v \in Y\colon parent(v) \in V(T_i)\}$, $z_{i, j} \in \{-1, 0, 1\}$ for all $i$ and $j = 1, 2, \ldots, N$ and $\sum_{j = 1}^N |z_{i, j}| \le |Y_i|$.
\end{lemma}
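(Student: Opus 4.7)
The plan is to identify the combinatorial shape of $T_i$ as a Fibonacci subtree with certain disjoint sub-subtrees removed, read off its imbalance by a direct Fibonacci calculation, and then invoke Zeckendorf's theorem to package everything into the required form.

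First, I would locate a canonical ``top vertex'' $r_i$ of $T_i$: the vertex of $T_i$ whose distance to the root of $T^F_N$ is smallest. Because $T_i$ is a maximal connected subgraph of $T^F_N$ avoiding $Y$, the parent of $r_i$ (if it exists) must lie in $Y$; combined with the hypothesis $Y \cap C_1(T^F_N) = \emptyset$ and the strict alternation of colors along the root-path, this forces $r_i \in C_1(T^F_N)$. By \Cref{fact:fibonacci_subtrees}, $r_i$ is then the root of some canonical Fibonacci subtree $T^F_a$ with $a \in [1,N]$, and every vertex of $T_i$ is a descendant of $r_i$ in that subtree.

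Next, I would describe $T_i$ as a controlled deletion inside $T^F_a$. Each $v \in Y_i$ lies in $C_2(T^F_N)$ by hypothesis, so by \Cref{fact:fibonacci_subtrees} it is the ``$u_2$'' of a unique Fibonacci subtree $T^F_{c(v)}$ with $c(v) \ge 3$; the subtree of $T^F_N$ rooted at $v$ equals $T^F_{c(v)}$ with its root $u_1$ removed, hence has $3F_{c(v)} - 3$ vertices and imbalance $F_{c(v)} - 1$ (we subtract a single $C_1$ vertex from a Fibonacci tree of imbalance $F_{c(v)}$ given by \Cref{thm:fibonacci_imb}). Because the elements of $Y_i$ are the topmost yellow vertices of $T^F_a$ below $T_i$, none is an ancestor of another, so the deleted subtrees are pairwise disjoint inside $T^F_a$ and $T_i$ is obtained from $T^F_a$ by removing exactly their union. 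Subtracting imbalances yields
\begin{align*}
    imb(T_i) = F_a - \sum_{v \in Y_i}\bigl(F_{c(v)} - 1\bigr) = |Y_i| + F_a - \sum_{v \in Y_i} F_{c(v)}.
\end{align*}

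It remains to show that $S := \sum_{v \in Y_i} F_{c(v)}$ can be written as $\sum_{j=1}^N z_j F_j$ with $z_j \in \{0,1\}$ and $\sum_j z_j \le |Y_i|$; then setting $r = a$ completes the proof. I would first prove $S \le F_a$ by induction on $a$: if the $u_2$ of $T^F_a$ lies in $Y_i$ then the topmost-yellow condition forces $Y_i$ to be exactly this singleton and $S = F_a$; otherwise $Y_i$ splits between the two child Fibonacci subtrees $T^F_{a-1}$ and $T^F_{a-2}$, and the inductive hypothesis combines to $S \le F_{a-1} + F_{a-2} = F_a$. Since $S \le F_a \le F_N$, Zeckendorf's theorem (properties (A)--(C)) expresses $S$ as $\sum z_j F_j$ with $z_j \in \{0,1\}$ and all indices at most $a$, which we pad by zeros on $[a+1, N]$. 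The multiset $\{F_{c(v)} : v \in Y_i\}$ is itself a Fibonacci-sum representation of $S$ using $|Y_i|$ terms, and since Zeckendorf is the shortest such representation, $\sum_j z_j \le |Y_i|$. The main obstacle is the structural step: recognising that the cut pattern induced by $Y_i$ respects the recursion $T^F_a \cong \{u_1, u_2\} \cup T^F_{a-1} \cup T^F_{a-2}$ and that the deleted subtrees sit in $T^F_a$ as a disjoint family, which is what simultaneously enables the imbalance subtraction and the inductive bound $S \le F_a$.
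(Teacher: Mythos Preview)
Your proof is correct and follows essentially the same route as the paper: identify the root $r_i$ of $T_i$ as the root of a Fibonacci subtree $T^F_a$, express $T_i$ as $T^F_a$ minus the disjoint subtrees hanging below the vertices of $Y_i$, subtract imbalances, and then apply Zeckendorf's minimality to bound $\sum z_j$. Your bookkeeping differs only cosmetically (you remove the whole subtree rooted at $v$ with $C_1$-excess $F_{c(v)}-1$, whereas the paper removes the two child subtrees with total imbalance $F_{c(v)}$ and then accounts for $v$ itself separately), and you supply an explicit induction for $S \le F_a$ where the paper merely asserts it.
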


\begin{proof}
    The intuition behind the theorem is straightforward:
    since $Y \cap C_1(T^F_N) = \emptyset$, it follows that each component $T_i$ of $T^F_N[R \cup B]$ is a tree rooted in a vertex $u_i$ that is also a root of some Fibonacci tree. It may be a Fibonacci tree itself -- but even if it is not, then it means it has some Fibonacci subtrees removed, which are cut off exactly in vertices from $Y_i$.
    But then every $v \in Y_i$ has children that are also roots of some Fibonacci subtrees.

    Now, let us define $c$ as a $2$-coloring of $T^F_N$ and $c_i$ as a $2$-coloring of $T_i$, such that all their respective roots have color $1$. From now on, $C_1(T^F_N)$ and $C_2(T^F_N)$ are defined with respect to $c$.
    
    Observe that it has to be the case that $c(u_i) = 1$, since every root of a Fibonacci subtree is at the even distance from the root of the Fibonacci tree $T^F_N$. Thus, for all $v \in V(T_i)$ it holds that $c_i(v) = c(v)$.
    
    Moreover, every leaf $v$ of $T_i$ also has $c_i(v) = 1$, as either it is a leaf of $T^F_N$ (thus $c(v) = 1$), or it is a  parent of some vertex from $Y_i$. However, in the latter case $Y \cap C_1(T^F_N) = \emptyset$ implies also that $c(v) = 1$, as the vertices from $Y_i$ always get color $2$.
    
    Finally, we note that to get $imb(T_i)$ we have to:
    \begin{itemize}
        \item add the imbalance of a Fibonacci subtree rooted in $u_i$, i.e. $T^F_r$ for some $r \in [1, N]$,
        \item remove the imbalance of some Fibonacci subtrees of $T^F_r$ (i.e. the ones rooted in children of all $v \in Y_i$ below $u_i$) -- note that for each vertex from $Y_i$ we remove its two subtrees $T_{j - 1}^F$ and $T_{j - 2}^F$ with total imbalance equal to $imb(T_{j - 1}^F) + imb(T_{j - 2}^F) = F_j$,
        \item add the size of $|Y_i|$ -- vertices from $Y_i$ were taken into account when adding the imbalance of the whole tree in the first case above, however, they are not in $T_i$. Note that their removal will increase the imbalance since they are always in $C_2(T_N^F)$.
    \end{itemize}
    See also an example in \Cref{fig:fibonacci}.

    \begin{figure}
        \centering
    \includegraphics[width=0.8\textwidth]{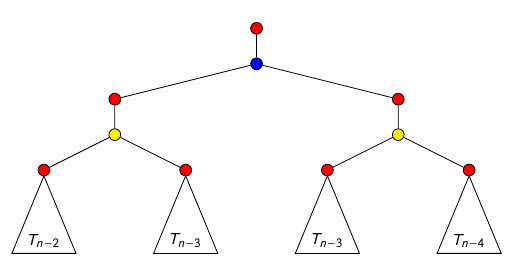}
        \caption{An example Fibonacci tree $T^F_n$ with $|Y_1| = 2$ and the imbalance of the top tree $imb(T_1) = |Y_1| + F_n - F_{n - 1} - F_{n - 2} = 2$.}
        \label{fig:fibonacci}
    \end{figure}
    
    Overall, if the imbalance of a Fibonacci subtree of $T^F_N$ rooted in $u_i$ is equal to $F_r$ for some $r$, then we get that $imb(T_i) = F_r - \sum_{j = 1}^{r - 1} z'_{i, j} F_j + |Y_i|$ for some $z'_{i, j} \in \mathbb{N}$ with $\sum_{j = 1}^{r - 1} |z'_{i, j}| \le |Y_i|$. Here each $z'_{i, j}$ denotes the number of Fibonacci subtrees $T_j^F$ which had to be cut out from $T_r^F$ to obtain exactly $T_i$.

    Finally, we can replace the last element using the fact that the definition of the Zeckendorf representation $Z(\sum_{j = 1}^{r - 1} z'_{i, j} F_j) = (-z_{i, j})_{j = 1}^{r - 1}$ implies $\sum_{j = 1}^{r - 1} z'_{i, j} F_j = \sum_{j = 1}^{r - 1} -z_{i, j} F_j$ for some $z_{i, j} \in \{-1, 0\}$. And from the property $(B')$ it follows that $\sum_{j = 1}^{r - 1} |z_{i, j}| \le \sum_{j = 1}^{r - 1} |z'_{i, j}| \le |Y_i|$.
    Therefore, it is sufficient to set $z_{i, r} = 1$ to conclude the proof. 
\end{proof}

\begin{lemma}
    \label{lem:split}
    If there exists a red-blue-yellow $(k, l)$-decomposition $(R, B, Y)$ of $T^F_N$ such that $|Y| = l$, $Y \cap C_1(T^F_N) = \emptyset$, then there exists a value $y \in [0, l]$ and a vector $z = (z_j)_{j = 1}^N$ with $z_j \in \{0, 1\}$, $\sum_{j = 1}^N |z_j| \le 2 l + 1$, and
    \begin{align*}
        \frac{F_N - k + l}{2} = y + \sum_{j = 1}^N z_j F_j.
    \end{align*}
\end{lemma}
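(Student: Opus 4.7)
First I would assign to each connected component $T_i$ of $T^F_N[R\cup B]$ a sign $s_i \in \{-1,+1\}$ indicating whether its majority class $C_1(T_i)$ sits in $R$ ($s_i=+1$) or in $B$ ($s_i=-1$). This is well defined since $R$ and $B$ are independent in the tree and, as in the proof of \Cref{lem:count}, each $T_i$'s root lies in $C_1(T^F_N)$; it gives $k = |R|-|B| = \sum_i s_i \cdot imb(T_i)$. On the other hand, the hypothesis $Y \subseteq C_2(T^F_N)$ yields $\sum_i imb(T_i) = |C_1(T^F_N)| - (|C_2(T^F_N)|-l) = F_N+l$ by \Cref{thm:fibonacci_imb}. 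Combining these two identities gives $\sum_{i\in S^-} imb(T_i) = \frac{F_N+l-k}{2}$, where $S^- := \{i : s_i=-1\}$.

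Next, I would substitute the formula $imb(T_i) = |Y_i|+F_{r_i}-\sum_j z_j^{(i)} F_j$ from \Cref{lem:count} and set $y := \sum_{i \in S^-}|Y_i|$. Since every $v \in Y$ has parent in $C_1(T^F_N) \setminus Y \subseteq R\cup B$, we have $\sum_i |Y_i| = |Y| = l$, hence $y \in [0,l]$. The target then rewrites as $\frac{F_N+l-k}{2}-y = M^+ - M^-$ with $M^+ = \sum_{i\in S^-} F_{r_i}$ and $M^- = \sum_{i\in S^-}\sum_j z_j^{(i)} F_j$. The $1+2l$ components of $T^F_N[R\cup B]$ are the top one rooted at $u_1$ (with $r_1=N$), plus, for each $v' \in Y$---necessarily the $u_2$-vertex of some Fibonacci subtree $T^F_r$ with $r \ge 3$---two components rooted at $v'$'s children, which are roots of $T^F_{r-1}$ and $T^F_{r-2}$. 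For each $v' \in Y$ whose two child-components are both in $S^-$, I would merge the two positive terms via $F_{r-1}+F_{r-2} = F_r$. After these merges, $M^+$ is a sum of at most $1+l$ Fibonacci numbers (one from the top component if $s_1=-1$, plus one per $v' \in Y$), while $M^-$ is already a sum of at most $y \le l$ Fibonacci numbers.

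Finally, I would apply Zeckendorf's theorem to $M^+$ and $M^-$ separately, obtaining Zeckendorf representations $M^+ = \sum_j p_j F_j$ and $M^- = \sum_j q_j F_j$ with $p_j,q_j \in \{0,1\}$. A standard extension of Zeckendorf's shortest-representation property (B) states that Zeckendorf is also the shortest representation of a non-negative integer as a sum of Fibonacci numbers with multiplicity---provable via the count-non-increasing reductions $F_j+F_j \to F_{j+1}+F_{j-2}$ (for $j\ge 3$), $F_j+F_{j+1}\to F_{j+2}$, and their small-index analogues. It follows that $\sum_j p_j \le 1+l$ and $\sum_j q_j \le l$. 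Taking $z_j := p_j - q_j \in \{-1,0,1\}$ yields $\sum_j z_j F_j = M^+ - M^- = \frac{F_N+l-k}{2}-y$ with $\sum_j |z_j| \le \sum_j(p_j+q_j) \le 2l+1$, as required. The hard part is the combining step in the previous paragraph: without merging at yellow vertices the raw signed count would be $|S^-|+y \le 1+3l$, exceeding the allowed bound, and it is exactly the specific Fibonacci structure of $T^F_N$---the identity $F_{r-1}+F_{r-2}=F_r$ applicable at every $v' \in Y$---that tightens the bound to $2l+1$.
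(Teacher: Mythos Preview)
Your argument follows the paper's overall strategy: place signs $s_i$ on the components of $T^F_N[R\cup B]$, express both $k$ and $F_N+l$ via the component imbalances, subtract to isolate $\sum_{i\in S^-} imb(T_i)=(F_N+l-k)/2$, substitute the formula from \Cref{lem:count}, and finish with Zeckendorf. The genuine difference is in how the bound $2l+1$ on the number of Fibonacci summands is reached. The paper asserts that $T^F_N\setminus Y$ has at most $l+1$ connected components and then bounds $\sum_i(|Y_i|+1)\le 2l+1$ directly; but since every $v\in Y\subseteq C_2(T^F_N)$ is the $u_2$-vertex of some $T^F_r$ with $r\ge 3$ and hence has degree $3$, removing the independent set $Y$ actually yields $1+2l$ components, and the paper's raw count then gives only $3l+1$. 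Your merging step---collapsing $F_{r-1}+F_{r-2}$ into a single $F_r$ whenever both child-components of a yellow vertex lie in $S^-$---is exactly what brings the positive-term count from $|S^-|\le 1+2l$ down to at most $1+l$, so that together with the $\le y\le l$ negative terms the total is $2l+1$. In other words, your route is the same in spirit but supplies the combinatorial step that the paper's component-count shortcut glosses over; separating $M^+$ and $M^-$ and applying Zeckendorf to each is also cleaner than the paper's somewhat informal invocation of Zeckendorf on a signed sum.
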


\begin{proof}
    The required decomposition exists if and only if there exists a set $Y$ with $l$ vertices such that we can find for connected components $T_i$ of $T_N^F[R \cup B]$ certain $2$-coloring $c$ which induces a partition into $R$ and $B$ such that $|R| - |B| = k$. 
    
    From \Cref{lem:count} we know that for every $T_i$ we can write $imb(T_i) = |Y_i| + \sum_{j = 1}^N z_{i,j} F_j$ such that $z_{i,j} \in \{-1, 0, 1\}$ and $\sum_{j = 1}^N |z_{i,j}| \le |Y_i| + 1$. Thus,
    \begin{align*}
        k = |R| - |B|
            & = \sum_{i = 1}^{l + 1} a_i\,imb(T_i) = y + \sum_{i = 1}^{l + 1} \sum_{j = 1}^N a_i z_{i,j} F_j.
    \end{align*}
    for some $y = \sum_{i = 1}^{l + 1} a_i |Y_i| \in \left[-l, l\right]$ and some $a_i \in \{-1, 1\}$.
    
    Intuitively, we add or subtract imbalances of $T_i$ depending on the fact whether the root of $T_i$ has color $1$ or $2$ in $c$.
    In particular, $a_i = -1$ if and only if the root of $T_i$ has color $2$ in $c$.
    Note that from $|Y| = l$ we know that $T_N^F[R \cup B] = T_N^F \setminus Y$ has exactly $l + 1$ connected components (some possibly empty).

    Now let us consider a coloring $c'$ of $T_N^F$ where the root has color $1$.
    Note that all subtrees $T_i$ are rooted in vertices with color $1$ in $c'$, and all vertices from $Y$ have color $2$ in $c'$.
    Therefore,
    \begin{align*}
        F_N + l
            & = imb(T_N^F) + |Y| = \sum_{i = 1}^{l + 1} imb(T_i) = |Y| + \sum_{i = 1}^{l + 1} \sum_{j = 1}^N z_{i,j} F_j,
    \end{align*}
    where the last equality follows from \Cref{lem:count}.
    
    Now, by subtracting the last two equations we obtain
    \begin{align*}
        \frac{F_N - k + l}{2} = \frac{|Y| - y}{2} + \sum_{i = 1}^{l + 1} \sum_{j = 1}^N a'_i z_{i,j} F_j,
    \end{align*}
    where we used the fact that $a'_i = \frac{1 - a_i}{2} \in \{0, 1\}$ for all $i = 1, 2, \ldots, l + 1$.
    
    Now observe that $\sum_{j = 1}^N F_j \sum_{i = 1}^{l + 1} a'_i z_{i,j}$ is a sum of Fibonacci numbers, i.e. each $F_j$ appears exactly $\sum_{i = 1}^{l + 1} a'_i z_{i,j}$ times. The total count of Fibonacci numbers in this sum does not exceed
    \begin{align*}
        \sum_{i = 1}^{l + 1} \sum_{j = 1}^N |a'_i z_{i,j}| \le \sum_{i = 1}^{l + 1} \sum_{j = 1}^N |z_{i,j}| \le \sum_{i = 1}^{l + 1} (|Y_i| + 1) \le |Y| + l + 1 = 2 l + 1,
    \end{align*}
    since we know from \Cref{lem:count} that $\sum_{j = 1}^N |z_{i,j}| \le |Y_i| + 1$ for every $i$.
    
    Therefore, by Zeckendorf theorem there exists a vector $z' = (z'_j)_{j = 1}^N$ with $z'_j \in \{0, 1\}$ so that $\sum_{j = 1}^N |z'_j| \le 2 l + 1$ and
    \begin{align*}
        Z\left(\sum_{i = 1}^{l + 1} \sum_{j = 1}^N a'_i z_{i,j} F_j\right) = z'.
    \end{align*}
    This completes the proof.
\end{proof}

\begin{lemma}
    \label{lem:fibonacci}
    For $N \ge 96$, any $k \in [0, F_{N / 2 - 1}]$, $l \in \left[1, \frac{N}{48} - 1\right]$ there does not exist a value $y \in [0, l]$ and a vector $z = (z_j)_{j = 1}^N$ with $z_j \in \{-1, 0, 1\}$ and $\sum_{j = 1}^N |z_j| \le 2 l + 1$ such that
    \begin{align*}
        \frac{F_N - k + l}{2} = y + \sum_{j = 1}^N z_j F_j.
    \end{align*}
\end{lemma}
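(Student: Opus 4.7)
The plan is to derive a contradiction by comparing two views of the Zeckendorf representation of $M := (F_N - k + l)/2$. First I would show that, for any admissible $k \in [0, F_{N/2-1}]$ (and any admissible $l$), the Zeckendorf representation of $M$ has a rigid periodic structure in its upper half: it contains $1$'s at positions $N-2, N-5, N-8, \ldots, N-2-3t^\ast$ for some $t^\ast \ge N/6 - O(1)$. The argument is by induction on $t$: since $|k-l|/2 < F_{N-3}$, one has $F_{N-2} \le M < F_{N-1}$, so the top Zeckendorf bit is at $N-2$; then $M - F_{N-2} = (F_{N-3} - k + l)/2$ has exactly the same shape with $N$ replaced by $N-3$, and the step can be iterated as long as $F_{N-3-3t} > k - l$, which holds up to $t \le N/6 - 1$ using $k \le F_{N/2-1}$ and $l \le N/48 - 1$.

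Second, I would exploit the hypothetical decomposition $M = y + \sum_j z_j F_j$. Multiplying by $2$ and using the identity $2 F_j = F_{j+1} + F_{j-2}$ (with $F_{-1} = 1$, $F_0 = 0$), one rewrites the equation as
\[
F_N \;=\; (k - l + 2 y) \;+\; \sum_j z_j \bigl(F_{j+1} + F_{j-2}\bigr),
\]
exhibiting $F_N - k$ as a signed Fibonacci sum of at most $4 l + 2$ terms plus a correction bounded by $3 l$. Because $y \le l$ is small (specifically $y < F_q$ for some $q = O(\log N)$), the Zeckendorf representations of $M$ and of $V := \sum_j z_j F_j = M - y$ agree on all positions above $q$, so in particular $V$ inherits the same $1$'s at $N-2, N-5, \ldots$ down to some position above $N/2$.

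The contradiction then comes from a structural claim: each prescribed upper Zeckendorf bit of $V$ at position $N - 2 - 3 t$ must be ``paid for'' either directly by $z_{N-2-3t} = 1$ or by a small cluster of compensating nonzero $z_j$'s nearby (arising from the local reduction moves $F_a + F_{a-1} = F_{a+1}$ and $F_a - F_{a-1} = F_{a-2}$), and the clusters corresponding to different values of $t$ do not share entries. Summing the resulting lower bounds over all $\sim N/6$ prescribed positions gives $\sum_j |z_j| \ge N/6 - O(1)$, which contradicts the allowed budget $\sum_j |z_j| \le 2 l + 1 \le N/24 - 1$ once $N \ge 96$.

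The delicate step is the last one, since signed Fibonacci sums can reduce to unexpectedly dense Zeckendorf forms (for instance, $F_N - F_1$ produces $\Theta(N)$ Zeckendorf $1$'s from only two signed terms). This is why one cannot simply bound the Zeckendorf support of $V$ by $|z|_1$. The essential input that blocks such pathological representations is the fact that the top Zeckendorf bit of $M$ is at $N - 2$ rather than $N - 1$---a consequence of the restriction $k \le F_{N/2-1}$---which forces the ``matching'' of each upper bit to consume genuine budget in $z$ and makes the bookkeeping of clusters above work out tightly.
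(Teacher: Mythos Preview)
Your outline shares the paper's two-stage structure: first pin down the periodic Zeckendorf pattern of $M=(F_N-k+l)/2$ (ones at $N-2,N-5,\dots$ down to roughly position $N/2$), then argue that a signed sum $\sum z_jF_j$ with small $\sum|z_j|$ cannot reproduce this. Your Step~1 is correct and essentially identical to what the paper does. The detour in Step~2 (multiplying by $2$ and invoking $2F_j=F_{j+1}+F_{j-2}$) is not used afterwards and can be dropped; what matters, and what you also state, is that $V=M-y$ inherits the same upper Zeckendorf bits.

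The gap is Step~3. You assert that each upper Zeckendorf $1$ of $V$ must be ``paid for'' by a nearby cluster of nonzero $z_j$'s, with disjoint clusters, but you do not prove it---and your own example $F_N-F_1$ shows exactly why a naive version of this claim is false. Your proposed remedy (that the top bit sits at $N-2$ rather than $N-1$) is a heuristic, not an argument: you have not explained what mechanism converts this positional fact into a lower bound on $\sum|z_j|$, nor why cascades of the moves $F_a\pm F_{a-1}$ cannot merge many period-$3$ bits into a few signed terms. As written, this is the heart of the lemma and it is missing.

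The paper closes this gap by a concrete carry analysis. It writes $V=S_1-S_2$ with $S_1,S_2\ge0$ collecting the positive and negative $z_j$'s (so $Z(S_1),Z(S_2)$ together have at most $2l+1$ ones), and then studies the Zeckendorf addition $Z(V)+Z(S_2)=Z(S_1)$ block by block in length-$3$ blocks aligned with the $(0,1,0)$ pattern of $Z(V)$. A short case check shows that both forward and backward carries out of a block where $Z(S_2)$ is nonzero die within one neighbouring block, so each of the at most $2l+1$ ``active'' blocks can disturb at most three blocks in total. Hence at least $N/6-3(2l+1)$ of the $(0,1,0)$ blocks survive unchanged in $Z(S_1)$, forcing $Z(S_1)$ to have more than $2l+1$ ones when $l\le N/48-1$. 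This carry-propagation bound is precisely the rigorous form of your ``disjoint clusters'' intuition; to complete your proof you need to supply an argument of this kind.
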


\begin{proof}
    It is well known that $F_{N - 2} < \frac{F_N}{2} < F_{N - 1}$ for any $N \ge 4$.
    Moreover, it follows from induction that
    \begin{align*}
        \frac{F_N}{2} = \sum_{i = 0}^{\lfloor\frac{N - 2}{3}\rfloor} F_{N - 2 - 3 i}.
    \end{align*}
    
    From now on let us denote $K = F_{N / 2 - 1}$ an $L = \frac{N}{48} - 1$.
    In fact, we would like to prove a slightly stronger claim: that any number $S \in \left[\frac{F_N - K - L}{2}, \frac{F_N + L}{2}\right]$ (a range encompassing all possible values of $\frac{F_N - k + l}{2} - y$ since $y \le l$) cannot be represented using a sum or difference of at most $2 l + 1$ Fibonacci numbers.

    Let $S' = S - \frac{F_N}{2}$.
    By assumption it holds that $L < K = F_{N / 2 - 1}$, so $|S'| \le \frac{K + L}{2} < F_{N / 2 - 1}$ and therefore $Z(|S'|)$ cannot have ones on positions $\left[\frac{N}{2} - 1, N\right]$.
    Let us take positions $\left[1, \frac{N}{2} - 1\right]$ of $Z\left(\frac{F_N}{2}\right)$, equivalent to some number $S''$ and observe that $S'' + S' \ge 0$ -- and, additionally, it also cannot have ones on positions $\left[\frac{N}{2} - 1, N\right]$.
    Therefore, we are certain that $Z(S)$ has exactly the same values as $Z\left(\frac{F_N}{2}\right)$ at least on positions $\frac{N}{2} + 1, \ldots, N$.

    Note that we can split $Z\left(\frac{F_N}{2}\right)$ into a sequence of $\lfloor\frac{N - 1}{3}\rfloor$ blocks of length $3$ equal to $(0, 1, 0)$. For convenience, but with a slight abuse of notation from now on we allow for a constant number of leading zeros in the Zeckendorf representations.
    Combining the last two facts we get that $Z(S) = Z\left(\frac{F_N}{2} + S'\right)$ contains a sequence of $\frac{N}{6}$ consecutive blocks $(0, 1, 0)$ between positions $\frac{N}{2} + 1$ and $N$.

    Now let us define $S_1 = \sum_{j = 1}^N 1_{z_j > 0} \cdot F_j$ and $S_2 = \sum_{j = 1}^N 1_{z_j < 0} \cdot F_j$. Clearly, $S = S_1 - S_2$ and $Z(S_1)$ and $Z(S_2)$ have both in total at most $2 l + 1$ ones.
    Therefore, to prove this lemma it is sufficient to show that if $Z(S)$ consists of $\frac{N}{6}$ consecutive blocks $(0, 1, 0)$ and $Z(S_2)$ has only $2 l + 1$ ones, then $Z(S_1)$ cannot have at most $2 l + 1$ ones.
    From now on we will concentrate only on these sections of the vectors $Z(S)$ and $Z(S_2)$, and the possible carry operations in the addition $Z(S) + Z(S_2) = Z(S_1)$.
    
    Let us write both $Z(S)$ and $Z(S_2)$ as blocks of length $3$ and denote $i$-th block of $Z(S_2)$ (and also, by association, the respective block of $Z(S)$) as type A if it is non-zero, and as type B otherwise.
    Note that there can be at most $2 l + 1$ blocks of type A since $Z(S_2)$ has at most $2 l + 1$ ones.
    
    Now consider how many blocks of type B can be influenced by a carry from blocks of type A. We have to analyze not only the forward carry but also the backward one, as for example, in Zeckendorf representation we have $Z(3) = (0, 0, 0; 1, 0, 0)$ but $Z(3 + 3) = (0, 1, 0; 0, 1, 0)$.
    
    The proof for the forward carry is simpler: suppose we consider the $i$-th block of type A, which starts with a position $j$, and that the $(i + 1)$-th block is of type B.
    Then we know that the number $S^*_2$ indicated only by the blocks up to $i - 1$ cannot exceed $F_j - 1$ by the definition of Zeckendorf representation. Similarly, for $Z(S)$ the respective number $S^*$ cannot exceed $F_{j - 1} - 1$ since its $(i - 1)$-th block has to be equal to $(0, 1, 0)$.
    Now, the forward carry has three components:
    \begin{itemize}
        \item the forward carry from the sum of $S^*$ and $S_2^*$ to the $i$-th block -- which does not exceed $(F_j - 1) + (F_{j - 1} - 1) < F_{j + 1}$, so it can be only in the form $(1, 0, 0)$,
        \item the value of the $i$-th block of $Z(S)$ -- equal to $(0, 1, 0)$,
        \item the value of the $i$-th block of $Z(S_2)$ -- at most equal to $(1, 0, 1)$ since it does not have two consecutive ones.
    \end{itemize}
    In any way, the forward carry to the $(i + 1)$-th block cannot exceed $(1, 1, 0)$. However, since the $(i + 1)$-th blocks of $Z(S)$ and $Z(S_2)$ are $(0, 1, 0)$ and $(0, 0, 0)$, respectively (since it is of type B), there is no forward carry to the $(i + 2)$-th block and beyond.
    
    That leaves us with proving that backward carry from a block of type A cannot influence too many blocks of type B.
    First, we note that $Z(S_2)$ by the property $(A)$ of the Zeckendorf representation does not have two consecutive ones. Thus, the only combinations available when we sum the rightmost blocks of type A (i.e. the ones which do not have blocks of type A to the right) are:
    \begin{enumerate}
        \item $(0,1,0) + (1,0,0) = (0,0,1)$,
        \item $(0,1,0) + (0,1,0) = (0,0,1)$ with a backward carry $(0, 0, 1)$,
        \item $(0,1,0) + (0,0,1) = (0,0,0)$ with a forward carry $(1, 0, 0)$,
        \item $(0,1,0) + (1,0,1) = (1,0,0)$ with a forward carry $(1, 0, 0)$.
    \end{enumerate}
    The only possible backward carry happens in the second case and it is equal to a single one on the rightmost position of the carry.
    
    Now, observe that if the block to the left is also of type A, then a respective block from $Z(S)$ is $(0, 1, 0)$ -- and when we add the backward carry $(0, 0, 1)$ to it, we obtain the forward carry to the rightmost block. And regardless of the value of the appropriate block of $Z(S_2)$, the total sum of the blocks and the backward carry cannot generate any further backward carry.
    Finally, note that the aforementioned forward carry resulting from backward carry appears in the block which has to be equal to $(0, 0, 1)$ (as it has to be the second case above), so it turns it into $(1, 0, 1)$ and it does not generate any future carries.
    
    Therefore, the only possible backward carry from the block of type A to the block of type B has to be in the form $(0, 0, 1)$. However, this will be combined with a block $(0, 1, 0)$ from $Z(S)$ -- thus, the sum of the blocks from $Z(S)$, $Z(S_2)$ and the backward carry again cannot result in any further backward carry.

    Thus, we proved that any block of type A can affect up to one block of type B to the left and one block of type B to the right.
    If there are only $2 l + 1$ blocks of type A, then at most $3 (2 l + 1)$ blocks of type A and B can be modified -- and therefore at least $\frac{N}{6} - 3 (2 l + 1)$ blocks of the form $(0, 1, 0)$ of $Z(S - S_2)$ are identical to the respective blocks of $Z(S)$ despite the addition of $Z(S_2)$.
    This means that $Z(S - S_2)$ has to have at least $\frac{N}{6} - 3 (2 L + 1)$ ones -- which contradicts the assumption that $Z(S - S_2) = Z(S_1)$ has at most $2 L + 1$ ones for $L = \frac{N}{48} - 1$.
\end{proof}

\begin{theorem}
    \label{thm:nonexistence}
    For any $N \ge 96$, any $k \in \left[0, F_{N / 2 - 1} - \frac{N}{24} + 2\right]$ and any $l \in \left[1, \frac{N}{48} - 1\right]$ there does not exist a red-blue-yellow $(k, l)$-decomposition of $T_N^F$.
\end{theorem}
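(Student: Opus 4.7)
The plan is a direct chain of the three preceding lemmas, where the upper bound on $k$ in the statement has been engineered precisely so that the chain closes. Suppose for contradiction that $(R, B, Y)$ is a red-blue-yellow $(k, l)$-decomposition of $T_N^F$ with $k \in \left[0, F_{N/2 - 1} - \tfrac{N}{24} + 2\right]$ and $l \in \left[1, \tfrac{N}{48} - 1\right]$.

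First I apply \Cref{lem:pushdown} to $(R, B, Y)$ to obtain a red-blue-yellow $(k', l)$-decomposition $(R', B', Y')$ with $Y' \cap C_1(T_N^F) = \emptyset$ and $0 \le k' \le k + 2 l$. The arithmetic
\[
    k' \le k + 2 l \le \left(F_{N/2 - 1} - \tfrac{N}{24} + 2\right) + 2 \left(\tfrac{N}{48} - 1\right) = F_{N/2 - 1}
\]
places $k'$ in the range required by \Cref{lem:fibonacci}. Set $l' = |Y'|$, so that $l' \le l \le \tfrac{N}{48} - 1$.

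Next I split on the size of $Y'$. If $l' = 0$, then $R' \cup B' = V(T_N^F)$ is the essentially unique $2$-coloring of $T_N^F$, which by \Cref{thm:fibonacci_imb} forces $k' = imb(T_N^F) = F_N$; but $F_N > F_{N/2 - 1} \ge k'$ for every $N \ge 96$, a contradiction. Hence $l' \ge 1$, and $(R', B', Y')$ is a valid input to \Cref{lem:split} taken with parameter $l'$, producing $y \in [0, l']$ and $z \in \{-1, 0, 1\}^N$ with $\sum_{j = 1}^N |z_j| \le 2 l' + 1$ such that
\[
    \frac{F_N - k' + l'}{2} = y + \sum_{j = 1}^N z_j F_j.
\]
Since $k' \in [0, F_{N/2 - 1}]$ and $l' \in \left[1, \tfrac{N}{48} - 1\right]$, this directly contradicts \Cref{lem:fibonacci}.

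The argument is essentially bookkeeping once the three lemmas are in hand; the only subtle point is that \Cref{lem:split} demands $|Y| = l'$ rather than $|Y| \le l'$, which forces the separate treatment of the degenerate case $l' = 0$ (closed off via $F_N > F_{N/2 - 1}$ and \Cref{thm:fibonacci_imb}). Beyond that, I only need to check that the additive slack $2 l$ introduced by \Cref{lem:pushdown} exactly cancels the gap $\tfrac{N}{24} - 2$ by which the hypothesis on $k$ undershoots $F_{N/2 - 1}$, which is immediate from $2 l \le \tfrac{N}{24} - 2$.
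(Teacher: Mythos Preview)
Your proof is correct and follows the same chain of lemmas as the paper: \Cref{lem:pushdown} to normalize $Y$, then \Cref{lem:split} to extract the Fibonacci identity, then \Cref{lem:fibonacci} for the contradiction, with the arithmetic $k + 2l \le F_{N/2-1}$ closing the loop. You are in fact slightly more careful than the paper: \Cref{lem:split} is stated with $|Y| = l$ rather than $|Y| \le l$, and you correctly pass to $l' = |Y'|$ and dispose of the degenerate case $l' = 0$ separately via \Cref{thm:fibonacci_imb}, whereas the paper's proof silently treats the $(k',l)$-decomposition as if $|Y'| = l$.
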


\begin{proof}
    A direct implication of \Cref{lem:split} and \Cref{lem:fibonacci} is that there does not exist a red-blue-yellow $(k', l)$-de\-com\-po\-si\-tion of $T_N^F$ such that $Y \cap C_1(T^F_N) = \emptyset$ for any $k' \in [0, F_{N / 2 - 1}]$ and $l = \frac{N}{48} - 1$.
    
    Now, it follows from \Cref{lem:pushdown} that if there was a red-blue-yellow $(k, l)$-decomposition of $T_N^F$ for $k$, then it would exist a red-blue-yellow $(k', l)$-de\-com\-po\-si\-tion of $T_N^F$ with $Y \cap C_1(T^F_N) = \emptyset$ and $k' \in [0, k + 2 l] \subseteq [0, F_{N / 2 - 1}]$ -- thus, a contradiction.
\end{proof}

Finally, we can proceed to the main theorem of this section, which shows the existence of the trees with constant degrees such that $BBC_\lambda(K_n, T) = \max\{n, 2\lambda\} + \Omega(\log{n})$:
\begin{theorem}
    There exists an infinite family of trees $T_N^F$ on $n = 3 F_N - 2$ vertices with $\Delta(T_N^F) = 3$ for $N \ge 96$ such that $BBC_\lambda(K_n, T_N^F) \ge \max\{n, 2\lambda\} + \frac{1}{48} \log_\phi{n} - 3$ for $\lambda = \left\lfloor\frac{n}{2}\right\rfloor$.
\end{theorem}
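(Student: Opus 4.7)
The plan is to combine \Cref{thm:impossibility}, which translates non-existence of red-blue-yellow decompositions into lower bounds on $BBC_\lambda$, with \Cref{thm:nonexistence}, which supplies precisely such non-existence for Fibonacci trees, and then to convert the bound from $N$ to $\log_\phi n$ via \Cref{thm:fibonacci_order}.

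First I set $\lambda = \lfloor n/2 \rfloor$, so that $2\lambda \in \{n-1, n\}$ and $\max\{n, 2\lambda\} = n$. I then fix $l = \lfloor N/48 \rfloor - 1$, which for $N \ge 96$ lies in the range $[1, N/48 - 1]$ required by \Cref{thm:nonexistence}. To apply \Cref{thm:impossibility} I must check two hypotheses: first, $2\lambda + l \ge n$, which is immediate from $2\lambda \ge n - 1$ and $l \ge 1$; and second, that no red-blue-yellow $(k, l)$-decomposition of $T_N^F$ exists for any $k \in [0, \min\{\lambda - 1, 2\lambda - n + l\}]$. For this observe that $\min\{\lambda - 1, 2\lambda - n + l\} \le 2\lambda - n + l \le l$, which is comfortably dominated by the cap $F_{N/2 - 1} - N/24 + 2$ from \Cref{thm:nonexistence} for any $N \ge 96$ (the left side is linear in $N$, the right exponential). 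Thus \Cref{thm:nonexistence} rules out every relevant value of $k$, and \Cref{thm:impossibility} delivers $BBC_\lambda(K_n, T_N^F) > 2\lambda + l$.

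It remains to translate this into the claimed form. Since $\max\{n, 2\lambda\} = n \le 2\lambda + 1$, the strict inequality above upgrades to $BBC_\lambda(K_n, T_N^F) \ge \max\{n, 2\lambda\} + l$. By the second part of \Cref{thm:fibonacci_order}, $N \ge \log_\phi n - 1$ for $n \ge 10$ (automatic since $N \ge 96$), so $l \ge N/48 - 2 \ge \tfrac{1}{48} \log_\phi n - 3$, as required. The family is infinite (one tree per $N \ge 96$) and every $T_N^F$ satisfies $\Delta(T_N^F) = 3$ by induction on $N$: the only vertex reaching degree $3$ at each recursive step is $u_2$, which has its parent $u_1$ and exactly two children, namely the roots of the two smaller Fibonacci subtrees. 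The only subtlety in the whole argument is confirming that the range $[0, \min\{\lambda - 1, 2\lambda - n + l\}]$ fits inside the range allowed by \Cref{thm:nonexistence}, but the contrast between linear and exponential growth makes this essentially automatic.
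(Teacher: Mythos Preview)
Your proof is correct and follows essentially the same route as the paper: fix $l$ near $N/48-1$, verify that the range of $k$ in \Cref{thm:impossibility} is contained in the range allowed by \Cref{thm:nonexistence}, combine the two, and convert $N$ to $\log_\phi n$ via \Cref{thm:fibonacci_order}. Your version is in fact slightly more careful than the paper's (you take $l=\lfloor N/48\rfloor-1$ to keep it an integer, and you explicitly verify $2\lambda+l\ge n$ and $\Delta(T_N^F)=3$), but the structure is identical.
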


\begin{proof}
    Assume that $l = \frac{N}{48} - 1$.
    Then it always holds that $2 \lambda - n + l \le \lambda - 1$ -- so in \Cref{thm:impossibility} we need that there cannot be any red-blue-yellow $(k, l)$-decomposition for $k \in [0, 2 \lambda - n + l]$.
    If $N \ge 96$, then it is easy to check that $F_{N / 2 - 1} - \frac{N}{24} + 2 > 2 \lambda - n + l$.
    Therefore, by combining \Cref{thm:nonexistence,thm:impossibility} we get that $BBC_\lambda(K_n, T_N^F) > 2 \lambda + \frac{N}{48} - 1$.
    To finish the proof it is sufficient to combine this result with two simple observations:
    \begin{itemize}
        \item by assumption it holds that $2\lambda \ge \max\{n, 2\lambda\} - 1$,
        \item by \Cref{thm:fibonacci_order} it is true that $N = \lfloor\log_\phi{n}\rfloor > \log_\phi{n} - 1$.
    \end{itemize}
\end{proof}
In fact, the theorem above can be proven in a more complicated, but also more general form, i.e. that $BBC_\lambda(K_n, T_N^F) > 2 \lambda + \frac{1}{48} \log_\phi{n} - 2$ for all $\lambda \in \left[\frac{n}{2} - \frac{1}{96} \log_\phi{n}, \frac{n}{2}\right]$ -- which gives us also an additive factor over $\max\{n, 2\lambda\}$.

\section{Conclusion}
\label{sec:open}

In this paper, we presented a new bound on $BBC_\lambda(K_n, F) = \max\{n, 2 \lambda\} + \Delta^2(F)\lceil\log{n}\rceil$ for forests $F$ on $n$ vertices. This is the first known upper bound depending on the maximum degree of $F$, better than the existing ones when this parameter is small, e.g. constant.
We also showed that the same reasoning allows us to construct a linear algorithm for finding an $\lambda$-backbone coloring of $K_n$ with backbone $F$ within an additive error of $\Delta^2(F)\lceil\log{n}\rceil$ from the optimum.
Finally, we proved that the bound is asymptotically tight, i.e. there exists a family of trees with $\Delta(T) = 3$ such that $BBC_\lambda(K_n, F) = \max\{n, 2 \lambda\} + \Theta(\log{n})$.

However, there remain a couple of open problems that stem from our research.
First, we can ask if the algorithm from \Cref{sec:positive} can be improved, i.e. does there exist an algorithm running in polynomial time which gives us a coloring using $BBC_\lambda(K_n, F) + o(\log{n})$ colors.
Or maybe it is the case that finding a $\lambda$-backbone coloring such that $BBC_\lambda(K_n, F) + f(n, \Delta(F))$ for some function $f$ is $\NP$-hard -- either for $\Delta(F)$ constant or provided as a part of input.

Second, there arises a natural question of how to extend our results to other classes of graphs.
An obvious extension would be an analysis for a class of split graphs, i.e. graphs whose vertices can be partitioned into a maximum clique $C$ (of size $\omega(G) = \chi(G)$) and an independent set $I$.
A simple application of \Cref{thm:bbc-algorithm-tree} gives us that
\begin{align*}
    BBC_\lambda(G, F) & \le \lambda + BBC_\lambda(G[C], F[C]) 
         \le \lambda + \max\{\chi(G), 2 \lambda\} + \Delta^2(F) \lceil\log{\chi(G)}\rceil,
\end{align*}
as we can first solve the problem restricted to $C$, find a $\lambda$-backbone coloring using colors from the set $[1, \max\{\chi(G), \allowbreak 2 \lambda\} + \Delta^2(F) \lceil\log{\chi(G)}\rceil]$, and then color all vertices in $I$ with a color $\lambda + \max\{\chi(G), 2 \lambda\} + \Delta^2(F) \lceil\log{\chi(G)} \rceil$.

However, it was proved before that
\begin{theorem}[Salman, \cite{salman2006lambda}]
If $G$ is a split graph and $T$ is its spanning tree, then
\begin{align*}
    BBC_{\lambda}(G, T) \le
    \begin{cases}
        1 & \text{for $\chi(G) = 1$,} \\
        \lambda + 1 & \text{for $\chi(G) = 2$,} \\
        \lambda + \chi(G) & \text{for $\chi(G) \ge 3$.}
    \end{cases}
\end{align*}
This bound is tight.
\end{theorem}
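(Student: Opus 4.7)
The plan is to handle the three cases of the piecewise bound separately. For $\chi(G) = 1$ the graph has no edges, so the spanning tree $T$ forces $|V(G)| = 1$ and a single color trivially suffices. For $\chi(G) = 2$, $G$ is connected and bipartite, hence admits a unique proper $2$-coloring; I would assign color $1$ to one part of the bipartition and color $\lambda + 1$ to the other, so that every edge of $G$ (and in particular of $T$) has endpoints differing by exactly $\lambda$, yielding a valid $\lambda$-backbone coloring with maximum color $\lambda + 1$.

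The substantive case is $\chi(G) \ge 3$. The key idea is to overlay the tree bipartition $(X, Y)$ of $T$ onto the split partition $V(G) = C \cup I$, producing four blocks $C_X = C \cap X$, $C_Y = C \cap Y$, $I_X = I \cap X$, $I_Y = I \cap Y$. I would then color them by putting $C_X$ bijectively on colors $\{1, \dots, |C_X|\}$, putting $C_Y$ bijectively on colors $\{\lambda + |C_X| + 1, \dots, \lambda + \chi(G)\}$, collapsing the entire $I_X$ to the single color $|C_X| + 1$, and collapsing the entire $I_Y$ to the single color $|C_X| + \lambda$. The maximum color used is $\lambda + \chi(G)$, and verification reduces to three straightforward checks: since $I$ is independent in $G$, the repeated colors inside $I_X$ and $I_Y$ cause no proper-coloring conflict; the two $C$-ranges and the two $I$-colors are pairwise distinct, with $|C_X| + 1$ and $|C_X| + \lambda$ falling in the gap between the two $C$-ranges (here I use $\lambda \ge 2$); and every edge of $T$ crosses $X$--$Y$, so it falls into one of the three admissible types $C_X$--$C_Y$, $C_X$--$I_Y$, or $I_X$--$C_Y$ (the type $I_X$--$I_Y$ does not occur because $T \subseteq G$ and $I$ is independent in $G$), each of which yields by direct subtraction a color gap of at least $\lambda$. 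Degenerate configurations with $|C_X| = 0$ or $|C_Y| = 0$ collapse cleanly because in those situations the corresponding $I$-block is forced to be empty by the connectedness of $T$.

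For tightness, I would exhibit, for each $\chi(G)$, a witnessing pair $(G, T)$. The cases $\chi(G) \in \{1, 2\}$ are realized by a single vertex and a single edge. For $\chi(G) \ge 3$ the main obstacle is that cliques alone already satisfy $BBC_\lambda(K_n, T) \le \lambda + n - 1$ by the Janczewski--Turowski bound recalled in the introduction, so the extra $+1$ in $\lambda + \chi(G)$ cannot be forced by the clique alone and must be extracted from the interplay between $C$ and $I$. My approach here would be to attach pendant vertices of $I$ to carefully chosen vertices of the clique and select the spanning tree so that every candidate low color on $I$ is blocked by a proper-coloring constraint while every candidate high color is blocked by a backbone constraint propagated along $T$, leaving $\lambda + \chi(G)$ as the only admissible maximum.
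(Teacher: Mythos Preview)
The paper does not prove this theorem at all: it is quoted in the concluding section as an external result of Salman, with a citation and no argument. There is therefore no ``paper's own proof'' to compare your proposal against.

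That said, your upper-bound argument is correct. The four-block scheme $C_X \mapsto \{1,\dots,|C_X|\}$, $I_X \mapsto |C_X|+1$, $I_Y \mapsto |C_X|+\lambda$, $C_Y \mapsto \{\lambda+|C_X|+1,\dots,\lambda+\chi(G)\}$ does produce a valid $\lambda$-backbone coloring with maximum color $\lambda+\chi(G)$, and your handling of the degenerate cases $C_X=\emptyset$ or $C_Y=\emptyset$ (forcing the opposite $I$-block to be empty via connectedness of $T$) is sound.

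The tightness part, however, is not a proof but a plan: you say you would ``attach pendant vertices of $I$ to carefully chosen vertices of the clique'' so that low colors are blocked by properness and high colors by backbone constraints, but you do not exhibit a concrete family or verify it. A workable construction is to take $C=\{v_1,\dots,v_m\}$, add independent vertices $u_1,\dots,u_{m-1}$ with $u_j$ adjacent to $v_j$ and $v_{j+1}$, and let $T$ be the path $v_1\,u_1\,v_2\,u_2\cdots u_{m-1}\,v_m$. With maximum color at most $\lambda+m-1$, every backbone edge forces one endpoint into $\{1,\dots,m-1\}$ and the other into $\{\lambda+1,\dots,\lambda+m-1\}$; since the path alternates, all of $v_1,\dots,v_m$ land in the same $(m-1)$-element set, contradicting their pairwise distinctness in $K_m$. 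Something of this shape is what your sketch is gesturing at, but as written the tightness half is incomplete.
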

Although our bound is more general, as it applies to all backbones $H$ as long as $H[C]$ is a forest, a simple comparison of both bounds shows that for trees (and forests) our bound is worse in all cases, even for small $\lambda$ and large $\chi(G)$. This is mainly because we assign a single color to $I$ naively instead of trying to optimize it.
Still, we can pose a question: is there a way to amend our approach to make it suitable also for split graphs and improve the existing bound at least for some subset of graphs?

\bibliography{backbone.bib}

\end{document}